\newenvironment{sciabstract}{%
\begin{quote} \bf}
{\end{quote}}
\newcommand{\id}{\mathds{1}}
\newcommand{\bra}[1]{\langle #1|}
\newcommand{\ket}[1]{|#1\rangle}
\renewcommand{\th}{^\mathrm{th}}
\newcommand\T{\rule{0pt}{2.6ex}}
\newcommand\B{\rule[-1.2ex]{0pt}{0pt}}
\newenvironment{proof}{\noindent \textbf{Proof:}}{\hfill$\Box$}
\newcommand{\braket}[2]{\langle #1|#2\rangle}
\newcommand{\eq}[1]{Eq.~\ref{#1}}
\newcommand{\sect}[1]{\S~\ref{#1}}
\newtheorem{theorem}{Theorem}
\newtheorem{proposition}{Proposition}
\newcounter{lastnote}
\newenvironment{scilastnote}{%
\setcounter{lastnote}{\value{enumiv}}%
\addtocounter{lastnote}{+1}%
\begin{list}%
{\arabic{lastnote}.}
{\setlength{\leftmargin}{.22in}}
{\setlength{\labelsep}{.5em}}}
{\end{list}}
\title{Quantum Algorithms for Quantum Field Theories}
\author
{Stephen P.\ Jordan,$^{1}$ Keith S.\ M.\ Lee,$^{2}$ John Preskill$^{3}$\\
\\
\normalsize{$^{1}$National Institute of Standards and Technology,
  Gaithersburg, MD}\\
\normalsize{$^{2}$University of Pittsburgh, Pittsburgh, PA}\\
\normalsize{$^{3}$California Institute of Technology, Pasadena, CA}\\
}
\date{}
\begin{document} 


\baselineskip24pt


\maketitle 


\begin{sciabstract}
  Quantum field theory reconciles quantum mechanics and special
  relativity, and plays a central role in many areas of physics.
  We develop a quantum algorithm to compute relativistic
  scattering probabilities in a massive quantum field theory with
  quartic self-interactions ($\phi^4$ theory) in 
  spacetime of four and fewer dimensions. Its run time is polynomial 
  in the number of particles, their energy, and the desired
  precision, and applies at both weak and strong coupling. In the
  strong-coupling and high-precision regimes, our quantum algorithm
  achieves exponential speedup over the fastest known classical
  algorithm.
\end{sciabstract}

The question whether quantum field theories can be
efficiently simulated by quantum computers was first posed by Feynman
three decades ago when he introduced the notion of quantum
computers \cite{Feynman}. Since then, efficient quantum algorithms have
been developed to simulate the dynamics of quantum lattice models and
quantum systems with a fixed number of particles \cite{Lloyd_science,
  Abrams_Lloyd, Cleve_sim, Childs_sim, Zalka, Wiesner}, but the
question about quantum field theories has remained open. 

In this paper, we show that quantum computers can efficiently
calculate scattering probabilities in continuum $\phi^4$ theory to
an arbitrary degree of precision. We choose $\phi^4$ theory because it
is among the simplest interacting quantum field theories, and thus
illustrates the essential issues without unnecessary complications. We
simulate a process in which initially well-separated particles with
well-defined momentum scatter off each other.

In our algorithm, we introduce several new techniques. 
First, we show that the field can be accurately represented  
with finitely many qubits by discretization of space via a lattice,
and discretization of the field value at each lattice site. 
Analyzing spatial discretization
errors is highly non-trivial for quantum field theories because of
renormalization. We approach this problem using
effective field theory. Secondly, we must create the initial state for
the simulation. We do so by developing a modified version of
adiabatic state preparation suitable for preparing non-eigenstates,
such as wavepackets. Thirdly, to improve the efficiency of simulating
the time evolution, we show that Suzuki-Trotter formulae converge
faster in cases where the underlying Hamiltonians have spatial
locality. These techniques may be of independent interest,
beyond their application to simulating quantum field theory.

No previous paper has addressed the quantum computation of scattering
amplitudes or the convergence of quantum simulations to the continuum
limit of a quantum field theory. The issue of gauge symmetries in
qubit representations of lattice field theories has been
studied~\cite{Byrnes}, and there is an extensive literature on 
how experimentally to construct Hamiltonians that approximate lattice 
gauge theories, in systems of atoms or superconducting qubits (see, for
example, \cite{Lewenstein, Johanning,
  Boada, Fischer, Menicucci, Petersen, Nation, Casanova, Casanova2,
  Snoek, Snoek2}). These previous studies are on the
experimental analog implementation of lattice Hamiltonians, whereas the
present work addresses digital simulation, with explicit
consideration of convergence to the continuum, and efficient
preparation of wavepacket states for the computation of dynamical
quantities such as scattering probabilities.

The input to our algorithm is a list of the momenta of the incoming
particles, and the output is a list of the momenta of the outgoing
particles produced by the physical scattering
process. At relativistic energies, the number of outgoing particles 
may differ from the number of incoming particles. However, because we
consider only the case of non-zero particle mass, the number of
outgoing particles is at most linear in the center-of-mass energy of
the incoming particles. In accordance with quantum mechanics, the
incoming momenta do not uniquely determine the outgoing
momenta, but rather a probability distribution over
possible outcomes. Upon repeated runs, our quantum algorithm samples
from this distribution. We quantify the precision of our simulation by
demanding that the probability of a given outcome from the simulation
differ from the true physical probability by no more than $\pm
\epsilon$.

The scattering processes simulated closely match experiments in
particle accelerators, which are the standard tools to probe uniquely 
quantum field-theoretical effects. The problem of calculating the scattering 
amplitudes, encoded in an object called the $S$-matrix, has consequently 
been well studied. 

In complexity theory, the efficiency of an algorithm is judged by
how its computational demands scale with the problem size or some other
quantity associated with the problem's intrinsic difficulty.
An algorithm with polynomial-time asymptotic scaling is considered to
be feasible, whereas one with super-polynomial (typically, exponential)
scaling is considered infeasible. This classification has proved to
be a very useful guide in practice. Our results can
be roughly summarized as follows: the calculation of quantum
field-theoretical scattering amplitudes at high precision or strong
coupling is infeasible on classical computers with known techniques
but feasible on quantum computers.

Traditional calculations of QFT scattering amplitudes rely upon 
perturbation theory, namely, a series expansion in powers of the coupling
(the coefficient of the interaction term), which is taken to be small. 
A powerful and intuitive way of organizing this perturbative expansion is 
through Feynman diagrams, in which the number of loops is associated with 
the power of the coupling. A reasonable measure of the computational
complexity of perturbative calculations is therefore the number of
Feynman diagrams, 
which is determined by
combinatorics, and grows factorially with the number of loops and the
number of external particles. 

If the coupling constant is insufficiently small, the perturbation
series does not yield correct results. In $\phi^4$ theory, for
$D=2,3$ spacetime dimensions, by increasing the coupling $\lambda_0$, 
one eventually reaches a quantum phase 
transition at some critical coupling
$\lambda_c$~\cite{Glimm:1974tz,Guerra:1975ym,McBryan:1976ga}. 
In the parameter space near this phase transition, perturbative
methods become unreliable; this region is referred to as the
strong-coupling regime.  There are then no known feasible classical methods
for calculating scattering amplitudes, although lattice field theory
can be used to obtain static quantities, such as mass ratios. Even at
weak coupling, the perturbation series is not convergent, although it
is asymptotic
\cite{Osterwalder:1975zn,Eckmann:1976xa,Constantinescu:1977xr}. 
Including higher-order contributions beyond a certain point makes the
approximation worse. There is thus a maximum possible precision
achievable perturbatively. 

We find that the number of quantum gates, $G_{\mathrm{weak}}$, needed to 
sample from scattering probabilities in weakly coupled, $(d+1)$-dimensional
$\phi^4$ theory with accuracy $\pm \epsilon$ scales as
follows\footnote{$f(n) = o(g(n))$ if and only if $\lim_{n \to \infty}
f(n)/g(n) = 0$. In the case of $\epsilon$ scaling it is of course
$1/\epsilon$ that is taken to infinity. We have used little-$o$ 
notation to simplify our exposition. For more technical detail,
see Appendix~\ref{supp}.}:

\begin{equation}
\label{weakscale}
G_{\mathrm{weak}} \sim
\left\{ \begin{array}{ll}
\left( \frac{1}{\epsilon} \right)^{1.5+o(1)} \,, & d=1 \,,\\
\left( \frac{1}{\epsilon} \right)^{2.376+o(1)} \,, & d=2 \,,\\
\left( \frac{1}{\epsilon} \right)^{3.564+o(1)} \,, & d=3 \,.
\end{array} \right.
\end{equation}

The asymptotic scaling of the number of gates used to simulate
the strongly coupled theory is summarized in Table~\ref{strongtable}.

\begin{table}[hbt]
\begin{center}
\begin{tabular}{|c|c|c|c|}
\hline
    & $\lambda_c - \lambda_0$ & $p$ & $n_{\mathrm{out}}$ \\
\hline
$d=1$ & $\left( \frac{1}{\lambda_c - \lambda_0} \right)^{8+o(1)}$ & 
$p^{4+o(1)}$ & $\tilde{O}(n_{\mathrm{out}}^5)$ \\
\hline
$d=2$ & $\left( \frac{1}{\lambda_c-\lambda_0} \right)^{5.04+o(1)}$ &
$p^{6+o(1)}$ & $\tilde{O}(n_{\mathrm{out}}^{7.128})$ \\
\hline
\end{tabular}
\end{center}
\caption{\label{strongtable} The asymptotic scaling
  of the number of quantum gates needed to simulate scattering in the
  strong-coupling regime in one and two spatial dimensions is polynomial
  in $p$, the momentum of the incoming pair of particles,
  $\lambda_c - \lambda_0$, the distance from the 
  phase transition, and $n_{\mathrm{out}}$, the maximum kinematically
  allowed number of outgoing particles. The notation
  $f(n) = \tilde{O}(g(n))$ means $f(n) = O(g(n) \log^c(n))$ for some
  constant $c$.}
\end{table}

Although quantum field theory is typically expressed in terms of
Lagrangians, and within the interaction picture, our algorithm is 
more naturally described in the formalism of Hamiltonians, and within the 
Schr\"odinger picture.
We start by defining a lattice $\phi^4$ theory,
and subsequently address convergence to the continuum theory.  
(In $D=4$, the continuum limit is believed to be the free theory.
Nevertheless, since the coupling shrinks only logarithmically, 
scattering processes for particles with small momenta in lattice units 
are still interesting to compute.)
Let
$\Omega = a \mathbb{Z}_{\hat{L}}^d$, that is, an $\hat{L} \times
\ldots \times \hat{L}$ lattice in $d$ spatial dimensions with periodic
boundary conditions and lattice spacing $a$. The number of lattice
sites is $\mathcal{V} = \hat{L}^d$. For each
$\mathbf{x} \in \Omega$, let $\phi(\mathbf{x})$ be a continuous, real
degree of freedom --- interpreted as the field at $\mathbf{x}$ ---
and $\pi(\mathbf{x})$ the corresponding canonically conjugate variable. 
In canonical quantization, these degrees of freedom are promoted to 
Hermitian operators with the commutation relation
\begin{equation}
\label{canonical}
[ \phi(\mathbf{x}), \pi(\mathbf{y})] = i a^{-d} \delta_{\mathbf{x},
  \mathbf{y}} \id.
\end{equation}
As is standard in quantum field theory, we use units with $\hbar = c =1$. 
$\phi^4$ theory on the lattice $\Omega$ is defined by the Hamiltonian
\begin{equation}
\label{Hamiltonian}
H = \sum_{\mathbf{x} \in \Omega} a^d \left[ \frac{1}{2}
  \pi(\mathbf{x})^2 + \frac{1}{2} ( \nabla_a \phi)^2(\mathbf{x}) +
  \frac{1}{2} m_0^2 \phi(\mathbf{x})^2 + \frac{\lambda_0}{4!}
  \phi(\mathbf{x})^4 \right],
\end{equation}
where $\nabla_a \phi$ denotes a discretized derivative, that is,
a finite-difference operator. 

We represent the state of the lattice field theory by devoting one
register of qubits to store the value of the field at each lattice
point. Each $\phi(\mathbf{x})$ is in principle an unbounded continuous
variable. To represent the field at a given site with finitely many
qubits, we cut off the field at a maximum magnitude $\phi_{\max}$ and
discretize it in increments of $\delta_{\phi}$. This requires $n_b =
O(\log(\phi_{\max}/\delta_{\phi}))$ qubits per site. Note that this
field discretization is a separate issue from the spatial discretization
via the lattice $\Omega$.

Let $\ket{\psi}$ be any state such that $\bra{\psi} H \ket{\psi} \leq
E$. The probability distribution over $\phi(\mathbf{x})$ defined by
$\ket{\psi}$ (for any $\mathbf{x} \in \Omega$) has a very low
probability\footnote{For $\lambda_0 > 0$ one has a tighter bound. In this
  case it is unlikely for $|\phi(\mathbf{x})|$ to be much larger than
  $O(E^{1/4})$ (\S \ref{qubits}).} for $|\phi(\mathbf{x})|$ to be much larger than
$O(\sqrt{E})$. Thus, a cutoff $\phi_{\max} = O \left( \sqrt{
  \frac{\mathcal{V} E}{a^d m_0^2 \epsilon}} \right)$
suffices to ensure fidelity
$1-\epsilon$ to the original state $\ket{\psi}$. One can prove this by
bounding $\bra{\psi}\phi(\mathbf{x}) \ket{\psi}$ and $\bra{\psi}
\phi^2(\mathbf{x}) \ket{\psi}$ as functions of $E$ and applying
Chebyshev's inequality (\S \ref{qubits}). To choose $\delta_\phi$, note
that the eigenbasis of $a^d \pi(\mathbf{x})$ is the Fourier transform
of the eigenbasis of $\phi(\mathbf{x})$. Hence, discretizing
$\phi(\mathbf{x})$ in units of $\delta_{\phi}$ is equivalent to
introducing the cutoff $-\pi_{\max} \leq \pi(\mathbf{x}) \leq
\pi_{\max}$, where $\pi_{\max} = \frac{1}{a^d   \delta_{\phi}}$. By
bounding the expectations of $\pi(\mathbf{x})$ and $\pi^2(\mathbf{x})$, 
one finds that it suffices to choose $\pi_{\max} = O \left(
\sqrt{\frac{\mathcal{V}E}{\epsilon a^d}} \right)$, and thus  $n_b = O
\left( \log \left(\frac{\mathcal{V}E}{m_0 \epsilon} \right) \right)$.

We now turn to the main three tasks of quantum simulation: preparing
an initial state, simulating the time evolution $e^{-iHt}$, and
measuring final observables. We discuss simulation of time evolution
first, as it is used in all three tasks. The unitary operator $e^{-iHt}$ can be
approximated by a quantum circuit of $O((t \mathcal{V})^{1+1/2k})$
gates implementing a $k\th$-order Suzuki-Trotter formula of the type
described in \cite{Suzuki, Cleve_sim}. This near-linear scaling with
$t$ has long been known. 
The scaling with
$\mathcal{V}$ is a consequence of the locality\footnote{$H$ couples
only nearest-neighbor sites, via the $(\nabla_a \phi)^2$ term.} of $H$
(\S \ref{Trotter}) and appears not to have been noted previously in the
quantum algorithms literature.

To simulate scattering,
one needs to prepare an initial state of particles in well-separated
wavepackets. We do so by preparing the vacuum of the $\lambda_0 = 0$
theory, exciting wavepackets, and then adiabatically turning on the
coupling $\lambda_0$.  
Let $H^{(0)}$ be the Hamiltonian obtained by setting $\lambda_0 = 0$
in $H$. $H^{(0)}$ defines an exactly solvable model in which the
particles are non-interacting. The vacuum (ground) state
$\ket{\mathrm{vac}(0)}$ of $H^{(0)}$ is a multivariate Gaussian 
wavefunction in the variables
$\{\phi(\mathbf{x})| \mathbf{x} \in \Omega\}$, and can therefore be
prepared using the method of Kitaev and Webb \cite{Kitaev_Webb}. The
asymptotic scaling of the Kitaev-Webb method is dictated by the
computation of the $\mathbf{L}\mathbf{D}\mathbf{L}^T$ decomposition of 
the covariance matrix, which can be done classically in 
$O(\mathcal{V}^{2.376})$ time with \cite{Bunch, Coppersmith}.

In analogy with the familiar case of the harmonic oscillator,
one can define creation and
annihilation operators  $a_{\mathbf{p}}$ and $a_{\mathbf{p}}^\dag$  such
that $H^{(0)} = \sum_{\mathbf{p} \in   \Gamma} L^{-d}
\omega_{\mathbf{p}} a_{\mathbf{p}}^\dag a_{\mathbf{p}} + E^{(0)} \id$,
where $\Gamma = \frac{2 \pi}{\hat{L}a} \mathbb{Z}^d_{\hat{L}}$ is the
momentum-space lattice corresponding to $\Omega$, $\omega_{\mathbf{p}}
= \sqrt{ m_0^2 + \frac{4}{a^2} \sum_{j=1}^d \sin^2 \left(
  \frac{a p_j}{2} \right)}$, and $E^{(0)}$ is an irrelevant zero-point
energy. The operator $a_{\mathbf{p}}^\dag$ can be interpreted as
creating a (completely delocalized) particle of the non-interacting
theory with momentum $\mathbf{p}$ and energy $\omega_{\mathbf{p}}$.

The (unnormalized) state $\phi(\mathbf{x}) \ket{\mathrm{vac}(0)}$ is
interpreted as a single particle localized at $\mathbf{x}$ (see, e.g.,
\cite{Peskin}).  Because $a_{\mathbf{p}} \ket{\mathrm{vac}(0)} = 0$,
$\phi(\mathbf{x}) \ket{\mathrm{vac}(0)} = a_{\mathbf{x}}^\dag
\ket{\mathrm{vac}(0)}$, where 
\begin{equation}
\label{axdag}
a_{\mathbf{x}}^\dag = \sum_{\mathbf{p} \in \Gamma} L^{-d} e^{-i
  \mathbf{p} \cdot \mathbf{x}} \sqrt{\frac{1}{2 \omega(\mathbf{p})}}
a_{\mathbf{p}}^\dag.
\end{equation}

The operator
\begin{equation}
\label{apsidag}
a_{\psi}^\dag = \eta(\psi) \sum_{\mathbf{x} \in  \Omega} a^d
\psi(\mathbf{x}) a_{\mathbf{x}}^\dag
\end{equation}
creates a wavepacket with position-space wavefunction
$\psi$. ($\eta(\psi)$ is a normalization constant, chosen so that
$[a_\psi, a_\psi^\dagger] = 1$.)
$a_{\psi}^\dag$ is not unitary, so it cannot be directly implemented
by a quantum circuit. Instead, we introduce an ancillary qubit and let
\begin{equation}
H_{\psi} = a_{\psi}^\dag \otimes \ket{1} \bra{0} + a_{\psi} \otimes
\ket{0} \bra{1}.
\end{equation}
One can verify that $e^{-i H_{\psi} \pi/2} \ket{\mathrm{vac}(0)} \ket{0} = -i
a_{\psi}^\dag \ket{\mathrm{vac}(0)} \ket{1}$. Using a high-order
Suzuki-Trotter formula \cite{Suzuki, Cleve_sim}, we can construct
an efficient quantum circuit approximating the unitary transformation
$e^{-i H_{\psi} \pi/2}$. Applied to $\ket{\mathrm{vac}(0)}$, this
circuit yields the desired state up to an irrelevant global phase and
an unentangled ancillary qubit, which can be discarded. We repeat this
process for each incoming particle desired. 

Because we wish to create localized wavepackets, we can choose
$\psi(\mathbf{x})$ to have bounded support. Expanding $a_{\psi}^\dag$
in terms of the operators $\phi$ and $\pi$ yields an expression of the 
form $a_{\psi}^\dag =
\sum_{\mathbf{x} \in \Omega} \left[ f(\mathbf{x}) \phi(\mathbf{x}) +
  g(\mathbf{x}) \pi(\mathbf{x}) \right]$, where $f(\mathbf{x})$ and
$g(\mathbf{x})$ are exponentially decaying with 
characteristic length scale $1/m_0$ outside the support of
$\psi$. Thus, $a_{\psi}$ and $a^\dag_{\psi}$ can be exponentially well
approximated by linear combinations of the operators $\phi$ and $\pi$ 
on a local region of space, and the complexity of simulating $e^{-i
  H_{\psi} \pi/2}$ does not scale with the volume $V$. Furthermore,
provided the initial wavepackets are separated by a distance that is large
compared with $1/m_0$, the preparation of each additional wavepacket
leaves the existing wavepackets almost perfectly undisturbed. 

At this point, we have finished constructing wavepackets of the
non-interacting theory. We next use a Suzuki-Trotter formula to
construct a quantum circuit simulating the unitary transformation
induced by a time-dependent Hamiltonian in which the coupling constant
is gradually increased from zero to its final value, $\lambda_0$. By
the adiabatic theorem, sufficiently slow turn-on ensures that no stray
particles are created during this process, provided particle creation
costs energy, that is, the particles have non-zero mass. In the free
theory, the particle mass is $m_0$. In the interacting theory, with
fixed $m_0$ and sufficiently large $\lambda_0$, the mass
vanishes. This marks the location of the $\phi \to -\phi$
symmetry-breaking transition. In this paper we restrict our attention
to simulations within the symmetric phase, although we do consider
systems arbitrarily close to the phase transition, as these should be
particularly hard to simulate classically.

As Eq.~\ref{apsidag} shows, wavepackets are not eigenstates of
$H^{(0)}$. During the adiabatic turn-on, the different eigenstates acquire
different dynamical phases. Thus, as the wavepacket time evolves, 
it propagates and broadens. This behavior is undesirable
in our simulation, because we do not wish the particles to collide and
scatter before the coupling reaches its final value. We therefore
introduce backward time evolutions governed by time-independent
Hamiltonians into the adiabatic state-preparation process to undo the
dynamical phases. Specifically, let $H(s)$ parameterize the adiabatic
time evolution, with $H(0) = H^{(0)}$ and $H(1) = H$. We divide the
adiabatic preparation into $J$ steps, with $U_j$ denoting the unitary
time evolution induced by the time-dependent Hamiltonian linearly
interpolating between $H((j-1)/J)$ and $H(j/J)$ over a period of
$\tau/J$. Let 
$M_j$ consist of backward, forward, and backward evolutions,
namely,
\begin{equation}
M_j = \exp \left[iH \left(\frac{j+1}{J} \right) \frac{\tau}{2J}
  \right] U_j  \exp \left[iH \left( \frac{j}{J} \right)
  \frac{\tau}{2J} \right].
\end{equation}
Our full state-preparation process is $\prod_{j=1}^J M_j$. The
dynamical phases converge to zero as $J \to \infty$, while the
adiabatic change of eigenbasis is undisturbed (\S \ref{preparing}).

After the system has evolved for a period in which scattering
occurs, measurement is performed as follows. The interaction is
adiabatically turned off, through the time-reversed version of the 
turn-on described above. Once we return to the free theory,
we can measure the number operators of the momentum modes,
using the method of phase estimation, that is, by simulating 
$e^{i L^{-d} a_{\mathbf{p}}^\dag a_{\mathbf{p}} t}$ for various 
values of $t$ and Fourier transforming the results~\cite{Kitaev95}.


Having described how, once discretized, a quantum field theory becomes 
essentially an ordinary many-body quantum-mechanical system, whose evolution 
can be efficiently simulated on quantum computers by combining
established primitives, we now consider discretization errors. 
To analyse the errors introduced to our simulation by discretization, we 
use methods of effective field theory, a well-developed formalism underlying 
our modern understanding of quantum field theory. 

In its regime of validity,
typically below a particular energy scale, an effective field theory (EFT)
reproduces the behavior of the full (that is, fundamental) theory under
consideration: it can be regarded as the low-energy limit of that theory. 
An EFT for a full theory is thus somewhat analogous to a Taylor series for
a function. 
It involves an expansion in some suitable small parameter, so that,
although it consists of infinitely many terms, higher-order terms are
increasingly suppressed. Thus, the series can be truncated, with 
corresponding finite and controllable errors.


We apply this framework to analyse the effect of discretizing the spatial
dimensions of the continuum $\phi^4$ quantum field theory.
The discretized Lagrangian can be thought of as the leading 
contribution (denoted by ${\cal L}^{(0)}$) to an effective field theory. 
From the leading operators
left out we can thus infer the scaling of the error associated with  
a non-zero lattice spacing, $a$.

The full (untruncated) effective Lagrangian will have every coupling respecting 
the $\phi \rightarrow -\phi$ symmetry, and so will take the form
\begin{equation}
{\cal L}_{\rm eff} = {\cal L}^{(0)} + \frac{c}{6!}\phi^6 
+ c'\phi^3\partial^2\phi + \frac{c''}{8!}\phi^8 + \cdots \,.
\end{equation}
This can be simplified. First, the chain rule and integration by parts
(with boundary terms dropped) can be used to write any operator with two 
derivatives acting on different fields in the form $\phi^n \partial^2 \phi$. 
For example,
$\phi^2 \partial_\mu \phi \partial^\mu \phi
 =  \frac{1}{3} \partial_\mu(\phi^3) \partial^\mu \phi
 \rightarrow  - \frac{1}{3} \phi^3 \partial^2 \phi $ .
Such an operator can then be simplified via the 
equation of motion \cite{Arzt:1993gz,Georgi:1991ch}.
If this were the equation of motion of the continuum theory, any
derivative operator would then be completely eliminated. In the 
discretized theory, however, the equation of motion is modified
and there are residual, Lorentz-violating operators.
In fact, because the difference operators in the discretized theory
are only approximately equal to the derivatives in the continuum theory,
the simplest Lorentz-violating operators are induced purely by
discretization.

In units where $\hbar = c = 1$, all quantities have units of some power of
mass. 
The mass dimensions (denoted by $[.]$) of the field and coupling in 
$D=d+1$ spacetime dimensions are
$ \left[ \phi \right] = \frac{D-2}{2}$ and $\left[ \lambda \right] = 4-D$, 
which imply that
\begin{equation}
 \left[ c \right] = 6-2D \,, \,\,\,
 \left[ c'' \right] = 8-3D \,.
\end{equation}
In $D=4$ dimensions, $\left[ c \right] = -2$ and 
$\left[ c'' \right] = -4$. Since the only relevant dimensionful
parameter is the lattice spacing, that is, \ $\Lambda \sim \pi/a$,
this means that $c \sim a^2$ and $c'' \sim a^4$. We see then
that, of the operators not included in the Lagrangian
${\cal L}^{(0)}$, $\phi^6$ is more significant than 
$\phi^{2n}$, for $n > 3$. 

In $D=2,3$, the scaling of the coefficients with $a$ is somewhat 
less obvious, because now the coupling $\lambda$ provides another 
dimensionful parameter. 
To obtain the scaling of $c$, one should consider the Feynman diagram 
that generates the corresponding operator. This involves three $\phi^4$ 
vertices, so
\begin{eqnarray}
\begin{array}{l} \includegraphics[width=0.6in]{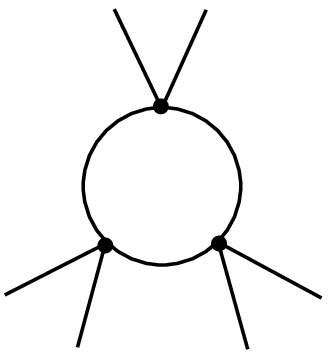} 
\end{array} 
& \sim & \lambda^3 a^{6-D} \,.
\end{eqnarray}
(Other diagrams involve higher powers of $\lambda$ and hence are
suppressed by higher powers of $a$.)
Likewise, the coefficient of $\phi^8$ will scale as $\lambda^4 a^{8-D}$, 
which means that it is suppressed by $a^2$ relative to the coefficient of 
$\phi^6$.

The effective field theory thus consists of three different
classes of operators: operators of the form $\phi^{2n}$,
Lorentz-violating operators arising solely from discretization effects,
and Lorentz-violating operators due to discretization and quantum effects.
These are shown with the scaling of their coefficients
in Table~\ref{table:eftshort}. At strong coupling, the operators and
their scaling remain the same at the scale of the matching of the full
theory on to the EFT, although the explicit coefficients are no longer 
calculable.  However, the running of the coefficients down to lower 
energies is determined by their so-called anomalous dimensions, which 
depend on the coupling strength. These
anomalous dimensions modify the scaling; at weak coupling the
modification is small, but at strong coupling it could be larger. (Still,
the scaling will remain polynomial.)

\noindent
\medskip
\begin{table}[hbt]
\begin{center}
\begin{tabular}{|c|c|c|}
\hline \T\B
Class & Operators & Scaling of coupling 
\\
\hline
\hline
\T\B
I & $\phi^{2n}$ ($n \geq 3$) & $\lambda^n a^{2n-D}$ \\[3pt]
\hline
\T\B
II & $\phi \partial_{\bf x}^{2l} \phi$ ($l \geq 2$) & $a^{2l-2}$ 
\\
\hline
\T
III & $\phi^{2j+1} \partial_{\bf x}^{2l} \phi$ 
& $\lambda^{j+1} a^{2j+2l+2-D}$ \\
\B & ($j\geq 1$, $l \geq2$) 
&  \\
\hline
\end{tabular}
\caption{Effective field theory operators fall into three classes.
The general operator in each class is shown, with the canonical scaling of 
its coefficient in $D$ spacetime dimensions. Here,
$\partial^{2l}_{\mathbf{x}} = \sum_{i=1}^d \partial_i^{2l}$. 
}
\label{table:eftshort}
\end{center}
\end{table}
\medskip

From Table \ref{table:eftshort}, one sees that the dominant
discretization errors scale as $a^2$ in $D=2,3,4$. (In $D=2,3$, errors
of type II dominate. In $D=4$, errors of types I and II each scale as
$a^2$.) These error terms shift scattering probabilities by $\pm
\epsilon$ where $\epsilon = O(a^2)$. Limiting $\epsilon$ determines
$a$ and $\mathcal{V} = \frac{V}{a^d}$, which then determine the
overall complexity of the quantum simulation algorithm described in
Eq.~\ref{weakscale} and Table~\ref{strongtable}. 

In conclusion, we have shown that quantum computers can efficiently
calculate scattering probabilities in $\phi^4$ theory to arbitrary
precision at both weak and strong coupling. 
Known classical algorithms take exponential time to do this
in the strong-coupling and high-precision regimes. 
In addition to establishing a new exponential quantum speedup, 
our algorithm introduces several new techniques.
These lead the way towards a quantum algorithm for simulating the 
Standard Model of particle physics, which has new features, such as
chiral fermions and gauge interactions. 
Such an algorithm would establish that, except for quantum-gravity effects, 
the standard quantum circuit model suffices to capture completely the 
computational power of our universe.



\bibliography{short_qft}

\begin{thebibliography}{10}

\bibitem{Feynman}
R.~P. Feynman, {\it International Journal of Theoretical Physics\/} {\bf 21},
  467 (1982).

\bibitem{Lloyd_science}
S.~Lloyd, {\it Science\/} {\bf 273}, 1073 (1996).

\bibitem{Abrams_Lloyd}
D.~S. Abrams, S.~Lloyd, {\it Physical Review Letters\/} {\bf 79}, 2586 (1997).

\bibitem{Cleve_sim}
D.~Berry, G.~Ahokas, R.~Cleve, B.~C. Sanders, {\it Communications in
  Mathematical Physics\/} {\bf 270}, 359 (2007). ArXiv:quant-ph/0508139.

\bibitem{Childs_sim}
A.~Childs, R.~Kothari, {\it arXiv:1003.3683\/}  (2010).

\bibitem{Zalka}
C.~Zalka, {\it Proceedings of the Royal Society of London A\/} {\bf 454}, 313
  (1998). ArXiv:quant-ph/9603026.

\bibitem{Wiesner}
S.~Wiesner, {\it arXiv:quant-ph/9603028\/}  (1996).

\bibitem{Byrnes}
T.~Byrnes, Y.~Yamamoto, {\it Physical Review A\/} {\bf 73}, 022328 (2006).
  ArXiv:quant-ph/0510027.

\bibitem{Lewenstein}
M.~Lewenstein, {\it et~al.\/}, {\it Advances in Physics\/} {\bf 56}, 243
  (2007). ArXiv:cond-mat/0606771.

\bibitem{Johanning}
M.~Johanning, A.~Var{\'o}n, C.~Wunderlich, {\it Journal of Physics B\/} {\bf
  42}, 154009 (2009). ArXiv:0905.0118.

\bibitem{Boada}
O.~Boada, A.~Celi, J.~I. Latorre, M.~Lewenstein, {\it arXiv:1010.1716\/}
  (2010).

\bibitem{Fischer}
U.~R. Fischer, R.~Sch{\"u}tzold, {\it Physical Review A\/} {\bf 70}, 063615
  (2004). ArXiv:cond-mat/0406470.

\bibitem{Menicucci}
N.~C. Menicucci, S.~J. Olson, G.~J. Milburn, {\it New Journal of Physics\/}
  {\bf 12}, 095019 (2010). ArXiv:1005.0434.

\bibitem{Petersen}
L.~Petersen, Quantum simulations in ion traps -- towards simulating the early
  expanding universe, arXiv:0704.3535 (2006). Diplomathesis.

\bibitem{Nation}
P.~D. Nation, M.~P. Blencowe, A.~J. Rimberg, E.~Buks, {\it Physical Review
  Letters\/} {\bf 103}, 087004 (2009). ArXiv:quant-ph/0904.2589.

\bibitem{Casanova}
J.~Casanova, {\it et~al.\/}, {\it arXiv:1107.5233\/}  (2011).

\bibitem{Casanova2}
J.~Casanova, A.~Mezzacapo, L.~Lamata, E.~Solano, {\it arXiv:1110.3730\/}
  (2011).

\bibitem{Snoek}
M.~Snoek, S.~Vandoren, H.~T.~C. Stoof, {\it Physical Review A\/} {\bf 74},
  033607 (2006). ArXiv:cond-mat/0604671.

\bibitem{Snoek2}
M.~Snoek, M.~Haque, S.~Vandoren, H.~T.~C. Stoof, {\it Physical Review
  Letters\/} {\bf 95}, 250401 (2005). ArXiv:cond-mat/0505055.

\bibitem{Glimm:1974tz}
J.~Glimm, A.~Jaffe, {\it Physical Review D\/} {\bf 10}, 536 (1974).

\bibitem{Guerra:1975ym}
F.~Guerra, L.~Rosen, B.~Simon, {\it Communications in Mathematical Physics\/}
  {\bf 41}, 19 (1975).

\bibitem{McBryan:1976ga}
O.~A. McBryan, J.~Rosen, {\it Communications in Mathematical Physics\/} {\bf
  51}, 97 (1976).

\bibitem{Osterwalder:1975zn}
K.~Osterwalder, R.~S{\'e}n{\'e}or, {\it Helvetica Physica Acta\/} {\bf 49}, 525
  (1976).

\bibitem{Eckmann:1976xa}
J.-P. Eckmann, H.~Epstein, J.~Fr{\"o}hlich, {\it Annales de l'institut Henri
  Poincar{\'e} (A) Physique th{\'e}orique\/} {\bf 25}, 1 (1976).

\bibitem{Constantinescu:1977xr}
F.~Constantinescu, {\it Annals of Physics\/} {\bf 108}, 37 (1977).

\bibitem{Suzuki}
M.~Suzuki, {\it Physics Letters A\/} {\bf 146}, 319 (1990).

\bibitem{Kitaev_Webb}
A.~Kitaev, W.~A. Webb, {\it arXiv:0801.0342\/}  (2008).

\bibitem{Bunch}
J.~R. Bunch, J.~E. Hopcroft, {\it Mathematics of Computation\/} {\bf 28}, 231
  (1974).

\bibitem{Coppersmith}
D.~Coppersmith, S.~Winograd, {\it Journal of Symbolic Computation\/} {\bf 9},
  251 (1990).

\bibitem{Peskin}
M.~E. Peskin, D.~V. Schroeder, {\it An Introduction to Quantum Field Theory\/}
  (Westview, 1995).

\bibitem{Kitaev95}
A.~Y. Kitaev, {\it arXiv:quant-ph/9511026\/}  (1995).

\bibitem{Arzt:1993gz}
C.~Arzt, {\it Physics Letters B\/} {\bf 342}, 189 (1995).

\bibitem{Georgi:1991ch}
H.~Georgi, {\it Nuclear Physics B\/} {\bf 361}, 339 (1991).

\bibitem{Grover_Rudolph}
L.~Grover, T.~Rudolph, {\it arXiv:quant-ph/0208112\/}  (2002).

\bibitem{LeGuillou:1977ju}
J.~C. Le~Guillou, J.~Zinn-Justin, {\it Phys. Rev. Lett.\/} {\bf 39}, 95 (1977).

\bibitem{Luscher:1987ay}
M.~Luscher, P.~Weisz, {\it Nucl. Phys.\/} {\bf B290}, 25 (1987).

\bibitem{Messiah}
A.~Messiah, {\it Quantum Mechanics\/} (Dover, 1999). (Reprint of the two volume
  edition published by Wiley, 1961-1962.).

\end{thebibliography}

\bibliographystyle{Science}


\begin{scilastnote}
\item We thank Alexey Gorshkov for helpful discussions. 
  This work was supported by NSF grant PHY-0803371, DOE grant 
  DE-FG03-92-ER40701, and NSA/ARO grant W911NF-09-1-0442.
  Much of this work was done while S.J. was at the Institute for Quantum
  Information (IQI), Caltech, supported by the Sherman Fairchild Foundation.
  K.L. was supported in part by NSF grant PHY-0854782. He is grateful for 
  the hospitality of the IQI, Caltech, during parts of this work. 
\end{scilastnote}

\clearpage

\appendix

\section{Supplementary Material}
\label{supp}

\subsection{Steps of Algorithm and Comments}
\label{details}

Our quantum algorithm works by the following sequence of steps.

\begin{enumerate}

\item \textbf{Prepare the free vacuum.}
Improving upon the efficiency of earlier, more general, state-construction 
  methods \cite{Zalka, Grover_Rudolph}, Kitaev and Webb
  developed a quantum algorithm for constructing multivariate Gaussian
  superpositions \cite{Kitaev_Webb}.
  For large $\mathcal{V}$, the dominant cost in Kitaev and Webb's
  method for producing $\mathcal{V}$-dimensional multivariate
  Gaussians is the computation of the $\mathbf{L}\mathbf{D}\mathbf{L}^T$ 
  decomposition of the inverse covariance matrix, where $\mathbf{L}$ is a 
  unit lower triangular matrix, and $\mathbf{D}$ is a diagonal matrix. 
  This can be done in $\tilde{O}(\mathcal{V}^{2.376})$ time with established 
  classical methods~\cite{Bunch, Coppersmith}. 
(The notation $f(n) = \tilde{O}(g(n))$ means $f(n) = O(g(n) \log^c(n))$ 
for some constant $c$.)
  The computation of the matrix 
  elements of the covariance matrix itself is easy because, for large $V$, 
  the sum
\begin{equation}
\label{freepropagator}
G^{(0)}(\mathbf{x} - \mathbf{y}) = \sum_{\mathbf{p} \in \Gamma} L^{-d}
\frac{1}{2 \omega(\mathbf{p})} e^{i \mathbf{p} \cdot (\mathbf{x}_i -
  \mathbf{x}_j)} 
\end{equation}
  defining the propagator of the lattice theory is well approximated
  by an easily evaluated integral.

\item \textbf{Excite wavepackets.}
The span of $\ket{\mathrm{vac}(0)} \ket{0}$ and
$\ket{\psi} \ket{1}$ is an invariant subspace, on which $H_\psi$
acts as
\begin{eqnarray}
H_\psi \ket{\mathrm{vac}(0)} \ket{0} & = & \ket{\psi} \ket{1} \,, \\
H_\psi \ket{\psi} \ket{1} & = & \ket{\mathrm{vac}(0)} \ket{0} \,.
\end{eqnarray}
Thus,
\begin{equation}
e^{-i H_\psi \pi/2} \ket{\mathrm{vac}(0)} \ket{0} = -i \ket{\psi} \ket{1} \,.
\end{equation}
Hence, by simulating a time evolution according to the Hamiltonian
$H_\psi$, we obtain the desired wavepacket state $\ket{\psi}$, up to an
irrelevant global phase and extra qubit, which can be discarded. After
rewriting $H_\psi$ in terms of the operators $\phi(\mathbf{x})$ and 
$\pi(\mathbf{x})$, one sees that simulating $H_\psi$ is a very similar task 
to simulating $H$, and can be done with the same techniques. 

The only errors introduced at this step are due to the finite separation
distance $\delta$ between wavepackets, and are of order $\epsilon \sim
e^{-\delta/m}$. 
(However, our wavepackets have a constant spread in momentum, and thus
differ from the idealization of particles with precisely defined momenta.)
The wavepacket preparation thus has complexity scaling
linearly with $n_{\mathrm{in}}$, the number of particles being
prepared, and necessitates a dependence $V \sim n_{\mathrm{in}}
\log(1/\epsilon)$.

\item \textbf{Adiabatically turn on the interaction.} 
For $0 \leq s \leq 1$, let 
\begin{equation}
\label{HS}
H(s) = \sum_{\mathbf{x} \in \Omega} a^d \left[ \frac{1}{2} \pi(\mathbf{x})^2 
+ \frac{1}{2} (\nabla_a \phi)^2 (\mathbf{x}) + \frac{1}{2} m_0^2(s)
  \phi(\mathbf{x})^2 + \frac{\lambda_0(s)}{4!} \phi(\mathbf{x})^4 \right]
\end{equation}
with $\lambda_0(0) = 0$.
$U_j$ is the unitary time evolution induced by $H(t/\tau)$ from
$t=\frac{j\tau}{J}$ to $t=\frac{(j+1)\tau}{J}$, namely,
\begin{eqnarray}
U_j & = & T \left\{ \exp \left[ -i \int_{j/J}^{(j+1)/J} H(s) \tau ds
  \right] \right\} \,,
\end{eqnarray}
where $T\{ \cdot \}$ indicates the time-ordered product. 
We suppress the dynamical phases by choosing $J$ to be sufficiently large.
The choice of a suitable ``path'' $\lambda_0(s),m_0^2(s)$, and
the complexity of this state-preparation process depends in a
complicated manner on the parameters in $H$ (\sect{preparing}). 

\item \textbf{Simulate Hamiltonian time evolution.}

\item \textbf{Adiabatically turn off the interaction.} 
The adiabatic turn-off of the coupling is simply the
  time-reversed version of the adiabatic turn-on. 

\item \textbf{Measure occupation numbers of momentum modes.} 
  For a given $\mathbf{p}$, measurement of  
  $L^{-d} a_{\mathbf{p}}^\dag a_{\mathbf{p}}$ by phase estimation
  can be implemented with $O \left( \mathcal{V}^{2+\frac{1}{2k}} \right)$ 
  quantum gates via a $k\th$-order Suzuki-Trotter formula. 
  Furthermore, if we instead simulate localized detectors, the
  computational cost becomes independent of $V$ (much as the
  computational cost of creating local wavepackets is independent of
  $V$), but the momentum resolution becomes lower, as dictated by the
  uncertainty principle.
\end{enumerate}

The allowable rate of adiabatic increase of the coupling constant
during state preparation is determined by the physical mass of the
theory. In the weakly coupled case, this can be calculated
perturbatively. In the strongly coupled case, such a calculation is no 
longer possible. Thus one is left with the
problem of determining how fast one can perform the adiabatic state
preparation without introducing errors. Fortunately, one can easily
calculate the mass on a quantum computer, as follows. First, one
adiabatically prepares the interacting vacuum state at some small
$\lambda_0$, and measures the energy of the vacuum using phase
estimation. 
The speed at which to increase $\lambda_0$ can be chosen
perturbatively for this small value of $\lambda_0$. Next, one
adiabatically prepares the state with a single zero-momentum particle
at the same value of $\lambda_0$, and measures its energy using phase
estimation. Subtracting these values yields the physical mass. This
value of the physical mass provides guidance as to the speed of
adiabatic increase of the coupling to reach a slightly higher
$\lambda_0$. Repeating this process for successively higher
$\lambda_0$ allows one to reach strong coupling, while always having an
estimate of mass by which to choose a safe speed for adiabatic state
preparation. In addition, mapping out the physical mass as a function
of bare parameters (hence, for example, mapping out the phase diagram)
may be of independent interest.

\subsection{Efficiency}
\label{efficiency}

To quantify the precision of a simulation, we demand that the
probability of a given scattering event in the simulation differ from
the true physical probability by no more than $\pm \epsilon$. There
are various sources of error: discretization of space, Trotter approximations,
imperfect adiabaticity, discretization and cutoff of the field at each
site, and imperfect spatial separation of particles in the asymptotic in
and out states. In a theory with a non-zero mass, errors due to imperfect 
particle separation shrink exponentially with distance. 
Thus, $V$ needs to scale only logarithmically
with $\epsilon$. Similarly, by the analysis of \sect{qubits},
the number of qubits per site scales only logarithmically with
$\epsilon$. By \eq{hightrotter}, the errors resulting from use of a
$k\th$-order Suzuki-Trotter formula with $n$ timesteps are $\epsilon \sim
n^{-2k}$. Thus, the complexity scales as $\epsilon^{-1/2k}$. For
large $k$, the dominant contributions to scaling with $\epsilon$ are
spatial discretization and imperfect adiabaticity.

The effect of spatial discretization is captured by (infinitely many) 
additional terms in the effective Hamiltonian. Truncation of 
these terms alters the calculated probability 
of scattering events. In particular, the two dominant extra terms in the 
effective Hamiltonian are $\sum_i \phi \partial_i^4 \phi$  and $\phi^6$
terms, arising from discretization of $(\nabla_a \phi)^2$ and 
quantum effects, respectively. The coefficient of the
$\sum_i \phi \partial_i^4 \phi$ term is $O(a^2)$, and the coefficient
of the $\phi^6$ term is $O(a^{5-d})$, so that the former dominates for
$d=1,2$, whereas the latter makes a comparable contribution for
$d=3$. Thus, the overall discretization error is
\begin{equation}
\label{spatial_contrib}
\epsilon = O(a^2) \,, \quad d=1,2,3 \,.
\end{equation}
(To improve the scaling, one can use better finite differences 
to approximate the derivative, and/or include the $\phi^6$ operator.
However, renormalization and mixing of the coefficients make this idea
more complicated than it is in standard numerical analysis.)

The diabatic errors at weak coupling are estimated and summarized in
\sect{weak}. The errors are quantified by a probability $\epsilon$ of
observing stray particles. Substituting the $a \sim \sqrt{\epsilon}$
dependence from \eq{spatial_contrib} into \eq{Gstrict}
yields\footnote{Whether we use \eq{Gstrict} or \eq{Glenient}
  affects only the scaling with $V$.}
\begin{equation}
G_{\mathrm{adiabatic}} \sim \left( \frac{1}{\epsilon}
\right)^{1+d/2+o(1)} \,,\quad d=1,2,3 
\end{equation}
scaling for the adiabatic state preparation. We use little-$o$
notation to convey precisely that we are neglecting both logarithmic
factors and contributions to the exponent that become arbitrarily
small as we use higher-order Suzuki-Trotter formulae. The other slow
part of the algorithm is the preparation of the free vacuum. This
scales as
\begin{equation}
G_{\mathrm{prep}} = \tilde{O}(\mathcal{V}^{2.376}) = \tilde{O}(a^{-2.376d}) =
\tilde{O}(\epsilon^{-1.188d}) \,,
\end{equation}
where the last equality follows from \eq{spatial_contrib}. 
Thus, in $d=1$ the adiabatic state preparation is the dominant cost, 
whereas in $d=2,3$ the preparation of the free vacuum dominates. This leaves 
a final asymptotic scaling of
\begin{equation}
G_{\mathrm{total}} = O(G_{\mathrm{adiabatic}} + G_{\mathrm{prep}}) = 
\left\{ \begin{array}{ll}
\left( \frac{1}{\epsilon} \right)^{1.5+o(1)} \,, & d=1, \\
\left( \frac{1}{\epsilon} \right)^{2.376+o(1)} \,, & d=2, \\
\left( \frac{1}{\epsilon} \right)^{3.564+o(1)} \,, & d=3.
\end{array} \right.
\end{equation}

The number of quantum gates used to simulate the
strongly coupled theory has scaling in $1/(\lambda_c - \lambda_0)$ and
$p$ that is dominated by adiabatic state preparation (\sect{strong}).
We also estimate scaling with $n_{\mathrm{out}}$ as follows. For
two incoming particles with momenta $\mathbf{p}$ and $\mathbf{-p}$, the
maximum number of kinematically allowed outgoing particles is $n_{\mathrm{out}}
\sim p$. For continuum behavior, $p = \eta/a$ for constant $\eta \ll
1$. Furthermore, one needs $V \sim n_{\mathrm{out}}$ to obtain
good asymptotic out states separated by a distance of at least $\sim
1/m_0$.  Thus, $\mathcal{V} \sim n_{\mathrm{out}}^{d+1}$, so one needs
$n_{\mathrm{out}}^{2.376 (d+1)}$ gates to prepare the free 
vacuum and, by \eq{strongpscale}, $n_{\mathrm{out}}^{2d+3+o(1)}$ gates to
reach the interacting theory adiabatically. (The adiabatic
turn-off takes no longer than the adiabatic turn-on.)
Hence the total scaling in $n_{\mathrm{out}}$ is dominated by
preparation of the free vacuum in three-dimensional spacetime, but by
adiabatic turn-on in two-dimensional spacetime. These results are
summarized in Table~\ref{strongtable}.

\subsection{Mass Renormalization}
\label{QPT}

The physical, or renormalized, mass as a function of the coupling features 
prominently in our calculations. For the weak-coupling regime, its form is 
obtained by perturbation theory. For the strong-coupling
regime, we use its known behavior near the phase transition.

At first order in the coupling, the shift of the squared mass is given by 
$i$ times the one-loop Feynman diagram
\begin{equation} \label{diag1}
\includegraphics[width=1.2in]{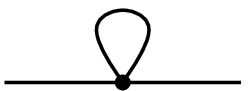} \,. 
\end{equation}
At second order, there is also a contribution from the two-loop diagram
\begin{equation} \label{diag2}
\includegraphics[width=1.2in]{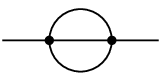} \,. 
\end{equation}
The calculation of these diagrams is quite analogous to standard
calculations in perturbative quantum field theory, but there 
are a couple of differences. First, the propagator is different because
of the discretization. Secondly, integrals over components $1,\ldots, d$
(but not component $0$) of loop momenta are cut off by $\pi/a$, that is, 
the lattice spacing acts as an ultraviolet regulator. These differences 
alter the nature the integrals and hence what methods can be used to 
evaluate them.  


The existence of a phase transition in the $\phi^4$ theory in
$D=2$ or $3$ spacetime dimensions was shown rigorously
in \cite{Glimm:1974tz,Guerra:1975ym,McBryan:1976ga}.
As the system approaches it, thermodynamic functions and correlation 
functions exhibit power-law behavior, as is characteristic of
a second-order phase transition. In particular, for constant $m_0^2$, 
\begin{equation}
\label{numass}
m \sim |\lambda_0 - \lambda_c|^\nu \,,
\end{equation}
where $\lambda_c$, the critical value of the coupling, depends on $m_0^2$. 

Empirically, it has been found that systems with second-order phase
transitions can be classified into universality classes.
Within each class, critical exponents are universal, taking the same
values for all systems.
(This universality is explained by the concept of the renormalization
group.) The $\phi^4$ theory is believed to be in the same
universality class as the Ising model, for which
\begin{equation}
\label{nu}
\nu = \left\{ \begin{array}{ll} 1 \,, & D=2 \,,
\\
0.63\ldots \,, & D=3 \,.
\end{array} \right.
\end{equation}
The value above for $D=3$ has also been obtained directly in the $\phi^4$
theory by Borel resummation \cite{LeGuillou:1977ju}.


In $D=4$ dimensions, in contrast, the believed triviality of the 
continuum $\phi^4$ theory implies that there is no non-trivial fixed point
of the renormalization group and hence no phase transition as one varies
($m_0^2$, $\lambda_0$). Moreover, triviality places bounds on the 
maximum value of the renormalized coupling \cite{Luscher:1987ay}.
In particular, strong coupling requires $p a$ to be $O(1)$: 
in the continuum-like regime, renormalized perturbation theory
should be valid.

\subsection{Representation by Qubits}
\label{qubits}

The required number of qubits per site is
\begin{equation}
\label{placevalue}
n_b = \log \left( 1+2 \lfloor \phi_{\max}/\delta_{\phi} \rfloor \right)
\,.
\end{equation}
In this section we show that one can simulate processes at
energy scale $E$, while maintaining $1-\epsilon$ fidelity to the exact
state, with $n_b$ logarithmic in $1/a$, $1/\epsilon$, and $V$. Our
analysis is nonperturbative, and thus applies equally to strongly and
weakly coupled $\phi^4$ theory. 

Let $\ket{\psi}$ be the state, expressed in the field representation, 
namely,
\begin{equation}
\ket{\psi} = \int_{-\infty}^\infty d \phi_1 \ldots
\int_{-\infty}^\infty d \phi_{\mathcal{V}} \ 
\psi(\phi_1,\ldots,\phi_{\mathcal{V}}) 
\ket{\phi_1, \ldots, \phi_{\mathcal{V}}} \,,
\end{equation}
and let
\begin{equation}
\ket{\psi_{\mathrm{cut}}} = \int_{-\phi_{\max}}^{\phi_{\max}} d \phi_1
\ldots \int_{-\phi_{\max}}^{\phi_{\max}} d \phi_{\mathcal{V}} \ 
  \psi(\phi_1, \ldots, \phi_{\mathcal{V}}) \ket{\phi_1, \ldots
    \phi_{\mathcal{V}}} \,.
\end{equation}
Then
\begin{equation}
\braket{\psi}{\psi_{\mathrm{cut}}} =  \int_{-\phi_{\max}}^{\phi_{\max}} d \phi_1
\ldots \int_{-\phi_{\max}}^{\phi_{\max}} d \phi_{\mathcal{V}}
\ \rho(\phi_1, \ldots, \phi_{\mathcal{V}}) \,,
\end{equation}
where $\rho$ is the probability distribution
\begin{equation}
\rho(\phi_1,\ldots, \phi_{\mathcal{V}}) = | \psi(\phi_1, \ldots, 
\phi_{\mathcal{V}})|^2 \,.
\end{equation}
In other words, $\braket{\psi}{\psi_{\mathrm{cut}}} = 1 -
p_{\mathrm{out}}$, where $p_{\mathrm{out}}$ is the probability that at
least one of $\phi_1,\ldots,\phi_{\mathcal{V}}$ is out of the range
$[-\phi_{\max},\phi_{\max}]$. By the union bound
($ \mathrm{Pr}(A \cup B) \leq \mathrm{Pr}(A) + \mathrm{Pr}(B)$),
\begin{equation}
\braket{\psi}{\psi_{\mathrm{cut}}} \geq 1 - \mathcal{V}
\max_{\mathbf{x} \in \Omega} p_{\mathrm{out}}(\mathbf{x}) \,,
\end{equation}
where $p_{\mathrm{out}}(\mathrm{x})$ is the probability that
$\phi(\mathbf{x})$ is out of the range $[-\phi_{\max},\phi_{\max}]$.

Let $\mu_{\phi(\mathbf{x})}$ and $\sigma_{\phi(\mathbf{x})}$ denote
the mean and standard deviation of $\phi(\mathbf{x})$ determined by
$\rho$. By Chebyshev's inequality, choosing $\phi_{\max} =
\mu_{\phi(\mathbf{x})} + c \sigma_{\phi(\mathbf{x})}$ ensures
\begin{equation}
p_{\mathrm{out}}(\mathbf{x}) \leq \frac{1}{c^2} \,.
\end{equation}
Thus, choosing
\begin{equation}
\label{phichoice}
\phi_{\max} = O \left( \max_{\mathbf{x} \in \Omega} \left(
\mu_{\phi(\mathbf{x})} + \sqrt{\frac{\mathcal{V}}{\epsilon}}
\sigma_{\phi(\mathbf{x})} \right) \right) 
\end{equation}
ensures $\braket{\psi}{\psi_{\mathrm{cut}}} \geq 1-\epsilon$.

Next, we observe the following, which is straightforward to prove.

\begin{proposition}
\label{canonfourier}
Let $\hat{p}$ and $\hat{q}$ be Hermitian operators on
$L^2(\mathbb{R})$ obeying the canonical commutation relation
$[\hat{p},\hat{q}]=i \id$. Then the eigenbasis of $\hat{p}$ is the
Fourier transform of the eigenbasis of $\hat{q}$.
\end{proposition}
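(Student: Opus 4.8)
The plan is to pass to the representation in which $\hat q$ is diagonal --- the Schr\"odinger (position) representation --- and to read off the action of $\hat p$ directly from the commutation relation. Writing an arbitrary state as $\ket{\psi} = \int_{-\infty}^{\infty} dq\, \psi(q) \ket{q}$, where $\hat q \ket{q} = q \ket{q}$, I would first determine how $\hat p$ acts on the wavefunction $\psi(q)$. The relation $[\hat p, \hat q] = i\id$ forces $\hat p$ to act as $i\,d/dq$ up to the addition of a multiplication operator that commutes with $\hat q$ (i.e.\ an arbitrary real function of $q$); such a term can be absorbed by a position-dependent phase redefinition of the $\ket{q}$, or ruled out by appealing to the irreducibility of the representation. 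I would therefore take $\hat p = i\, d/dq$ in this basis, which one checks directly reproduces $[\hat p, \hat q] = i\id$.

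With the action of $\hat p$ fixed, the eigenvalue equation $\hat p \ket{p} = p \ket{p}$ becomes the first-order ODE $i\, \psi_p'(q) = p\, \psi_p(q)$, whose solution, normalized in the delta-function sense, is $\psi_p(q) = \braket{q}{p} = \frac{1}{\sqrt{2\pi}} e^{-ipq}$. This kernel is precisely that of the Fourier transform: the components of an eigenstate of $\hat p$ expressed in the eigenbasis of $\hat q$ are the Fourier modes $e^{-ipq}$, so the unitary carrying the $\hat q$-eigenbasis to the $\hat p$-eigenbasis is the Fourier transform, which is the claim. This is exactly the duality used in the main text: discretizing $\phi(\mathbf{x})$ is equivalent, under Fourier transform, to imposing a cutoff on $\pi(\mathbf{x})$.

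The delicate point --- and the only genuine obstacle --- is that these ``eigenstates'' are non-normalizable: $\ket{q}$ and $\ket{p}$ are tempered distributions rather than honest vectors of $L^2(\mathbb{R})$, and the unbounded operators $\hat p, \hat q$ demand care about domains. A fully rigorous version would either phrase the statement within a rigged Hilbert space (Gelfand triple), or invoke the Stone--von Neumann theorem, which guarantees that every irreducible pair satisfying the (Weyl form of the) canonical commutation relation is unitarily equivalent to the Schr\"odinger pair $(\,i\,d/dq,\ q\,)$; this is also the clean way to discard the additive function-of-$q$ ambiguity in $\hat p$ noted above. For the purposes of this paper, however, the formal computation sketched here is what is needed, and it is in this sense that the proposition is ``straightforward to prove.''
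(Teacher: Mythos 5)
Your proof is correct, and there is in fact nothing in the paper to compare it against: the authors state this proposition with only the remark that it is ``straightforward to prove'' and supply no proof, so the standard position-representation computation you give is evidently the argument they have in mind. Your sign bookkeeping is right for the paper's convention $[\hat{p},\hat{q}]=+i\id$: one checks $\hat{p}=i\,d/dq$ reproduces it, and the eigenvalue equation $i\psi_p'(q)=p\,\psi_p(q)$ gives $\braket{q}{p}=e^{-ipq}/\sqrt{2\pi}$, the Fourier kernel, which is the claim (and also the form in which the main text uses it, relating the $\phi(\mathbf{x})$ cutoff to the $\pi(\mathbf{x})$ cutoff). The one point you flag that deserves to be stated more sharply is that, read literally, the proposition is not a theorem: hermiticity plus the commutation relation on a dense domain does not by itself force the Schr\"odinger pair. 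Beyond domain pathologies (Nelson's example), there is a more elementary obstruction: on $L^2(\mathbb{R})\cong L^2(\mathbb{R})\oplus L^2(\mathbb{R})$ a direct sum of two Schr\"odinger pairs satisfies the hypotheses, yet $\hat{q}$ then has doubly degenerate spectrum and ``the eigenbasis'' of $\hat{q}$ is not even well defined up to the Fourier transform. Irreducibility (equivalently, the Weyl form of the CCR plus Stone--von Neumann) is precisely what licenses the first step of your argument --- that anything commuting with $\hat{q}$ is multiplication by a function of $q$, so $\hat{p}=i\,d/dq+f(q)$ with real $f$ removable by the phase redefinition $\ket{q}\mapsto e^{i\theta(q)}\ket{q}$, $\theta'=f$. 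With that hypothesis made explicit, your proof is complete, and is if anything more careful than the level of rigor the paper intends.
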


By Proposition~\ref{canonfourier}, the eigenbasis of $a^d \pi(\mathbf{x})$ 
is the Fourier transform of the eigenbasis of $\phi(\mathbf{x})$. Thus,
discretizing $\phi(\mathbf{x})$ in increments of
$\delta_{\phi(\mathbf{x})}$ is roughly equivalent to the truncation $-
\pi_{\max} \leq \pi(\mathbf{x}) \leq \pi_{\max}$, where
\begin{equation}
\label{pimax}
\pi_{\max} = \frac{1}{a^d \delta_{\phi(\mathbf{x})}} \,.
\end{equation}
By the same argument used to choose $\phi_{\max}$, choosing
\begin{equation}
\label{pichoice}
\pi_{\max} = O \left( \max_{\mathbf{x} \in \Omega} \left(
\mu_{\pi(\mathbf{x})} + \sigma_{\pi(\mathbf{x})}
\sqrt{\frac{\mathcal{V}}{\epsilon}} \right) \right)
\end{equation}
ensures fidelity $1-\epsilon$ between $\ket{\psi}$ and its truncated
and discretized version.

To obtain useful bounds on $\phi_{\max}$ and $\pi_{\max}$, we must bound
$\mu_{\phi(\mathbf{x})}$, $\sigma_{\phi(\mathbf{x})}$, 
$\mu_{\pi(\mathbf{x})}$, and $\sigma_{\pi(\mathbf{x})}$. 
To this end, we make the following straightforward observation.
\begin{proposition}
\label{moments}
Let $M$ be a Hermitian operator and let $\ket{\psi}$ be a quantum
state. Then $|\bra{\psi} M \ket{\psi}| \leq \sqrt{ \bra{\psi} M^2
  \ket{\psi} }$.
\end{proposition}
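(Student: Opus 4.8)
The plan is to recognize this as an instance of the Cauchy--Schwarz inequality combined with Hermiticity. First I would introduce the vector $\ket{\phi} = M \ket{\psi}$, so that the quantity of interest becomes an inner product, $\bra{\psi} M \ket{\psi} = \braket{\psi}{\phi}$. Applying the Cauchy--Schwarz inequality to the pair $\ket{\psi}, \ket{\phi}$ then gives $|\braket{\psi}{\phi}| \leq \sqrt{\braket{\psi}{\psi}} \, \sqrt{\braket{\phi}{\phi}}$.

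Next I would evaluate the two norms on the right-hand side. Since $\ket{\psi}$ is a normalized quantum state, $\braket{\psi}{\psi} = 1$. For the remaining factor, Hermiticity of $M$ (that is, $M^\dagger = M$) gives $\braket{\phi}{\phi} = \bra{\psi} M^\dagger M \ket{\psi} = \bra{\psi} M^2 \ket{\psi}$. Combining these yields $|\bra{\psi} M \ket{\psi}| \leq \sqrt{\bra{\psi} M^2 \ket{\psi}}$, which is precisely the claim.

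An equivalent route, which exposes the physical content, is to diagonalize $M$ and interpret $\bra{\psi} M \ket{\psi}$ and $\bra{\psi} M^2 \ket{\psi}$ as the first and second moments of the probability distribution over eigenvalues of $M$ induced by $\ket{\psi}$; the inequality is then simply the statement that the variance $\bra{\psi} M^2 \ket{\psi} - \bra{\psi} M \ket{\psi}^2$ is nonnegative. I expect no genuine obstacle here. The only point deserving a word of care is that, in the infinite-dimensional Hilbert space relevant to our application, one should ensure $\ket{\psi}$ lies in the domain of $M$ so that $\ket{\phi} = M\ket{\psi}$ is a well-defined vector of finite norm. This holds for the operators $M = \phi(\mathbf{x})$ and $M = \pi(\mathbf{x})$ on the states we consider, since the relevant second moments $\bra{\psi} M^2 \ket{\psi}$ are finite, so the argument goes through unchanged.
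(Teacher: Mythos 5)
Your proof is correct, and your primary route is genuinely different from the paper's, though closely related. The paper argues directly that the operator $\left( M - \langle M \rangle \id \right)^2$ is positive semidefinite and expands its expectation value to obtain $\langle M^2 \rangle - \langle M \rangle^2 \geq 0$ --- in other words, exactly the variance-nonnegativity argument you sketch as your ``equivalent route,'' except carried out algebraically without diagonalizing $M$ (a point in the paper's favor, since spectral decomposition of the unbounded operators $\phi(\mathbf{x})$ and $\pi(\mathbf{x})$ would itself require justification). Your main argument instead applies Cauchy--Schwarz to $\ket{\psi}$ and $M\ket{\psi}$, using Hermiticity only to rewrite $\bra{\psi} M^\dagger M \ket{\psi}$ as $\bra{\psi} M^2 \ket{\psi}$; this buys slightly more generality, since it yields $|\bra{\psi} M \ket{\psi}| \leq \sqrt{\bra{\psi} M^\dagger M \ket{\psi}}$ for arbitrary (not necessarily Hermitian) $M$, whereas the paper's route is self-contained and produces the variance bound $\sigma^2 = \langle M^2 \rangle - \langle M \rangle^2 \leq \langle M^2 \rangle$ in exactly the form consumed downstream in bounding $\mu_{\phi(\mathbf{x})}$, $\sigma_{\phi(\mathbf{x})}$, $\mu_{\pi(\mathbf{x})}$, and $\sigma_{\pi(\mathbf{x})}$, and in the step $\bra{\psi} \phi(\mathbf{x})^4 \ket{\psi} \geq \bra{\psi} \phi(\mathbf{x})^2 \ket{\psi}^2$ of Proposition~\ref{mneg}. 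Your closing caveat about domains is a genuine refinement the paper passes over in silence: since the proposition is applied to unbounded operators, one should indeed check that $\ket{\psi}$ lies in the domain of $M$ (equivalently, that $\bra{\psi} M^2 \ket{\psi} < \infty$, which is guaranteed here by the energy bounds of Propositions~\ref{mpos} and~\ref{mneg}); note the paper's own proof has the same implicit assumption, since $\langle M \rangle$ and $\langle M^2 \rangle$ must be finite for its expansion to make sense.
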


\begin{proof}
For brevity, let $\langle Q \rangle = \bra{\psi} Q \ket{\psi}$ for any
observable $Q$. The operator $\left( M - \langle M \rangle \id
\right)^2$ is positive semidefinite. Thus,
\begin{eqnarray}
0 & \leq & \left\langle \left( M - \langle M \rangle \id \right)^2
\right\rangle \\
& = & \left\langle M^2 - 2 \langle M \rangle M + \langle M \rangle^2 \id
\right\rangle \\
& = & \langle M^2 \rangle - \langle M \rangle^2 \,.
\end{eqnarray}
\end{proof} 

\noindent
Applied to the definitions
\begin{eqnarray}
\mu_{\phi(\mathbf{x})} & = & \bra{\psi} \phi(\mathbf{x}) \ket{\psi} \,, \\
\sigma_{\phi(\mathbf{x})} & = & \sqrt{ \bra{\psi} \phi(\mathbf{x})^2
    \ket{\psi} - \bra{\psi} \phi(\mathbf{x}) \ket{\psi}^2} \,, \\
\mu_{\pi(\mathbf{x})} & = & \bra{\psi} \pi(\mathbf{x}) \ket{\psi} \,, \\
\sigma_{\pi(\mathbf{x})} & = & \sqrt{ \bra{\psi} \pi(\mathbf{x})^2
  \ket{\psi} - \bra{\psi} \pi(\mathbf{x}) \ket{\psi}^2} \,,
\end{eqnarray}
Proposition~\ref{moments} implies that $\mu_{\phi(\mathbf{x})}$ and
$\sigma_{\phi(\mathbf{x})}$ are each at most 
$\sqrt{\bra{\psi} \phi(\mathbf{x})^2 \ket{\psi}}$, and
$\mu_{\pi(\mathbf{x})}$ and $\sigma_{\pi(\mathbf{x})}$ are each at
most $\sqrt{\bra{\psi} \pi(\mathbf{x})^2 \ket{\psi}}$. Thus, by
\eq{phichoice} and \eq{pichoice},
\begin{eqnarray}
\phi_{\max} & = & O \left( \max_{\mathbf{x} \in \Omega} \sqrt{
  \frac{\mathcal{V}}{\epsilon} \bra{\psi} \phi(\mathbf{x})^2
  \ket{\psi}} \right) \,,\\
\pi_{\max} & = & O \left( \max_{\mathbf{x} \in \Omega} \sqrt{
  \frac{\mathcal{V}}{\epsilon} \bra{\psi} \pi(\mathbf{x})^2
  \ket{\psi}} \right) \,,
\end{eqnarray}
so that, by \eq{placevalue} and \eq{pimax},
\begin{equation}
n_b = O \left( \log \left( a^d \frac{\mathcal{V}}{\epsilon}
\max_{\mathbf{x},\mathbf{y} \in \Omega} \sqrt{\bra{\psi} \pi(\mathbf{x})^2
\ket{\psi} \bra{\psi} \phi(\mathbf{y})^2 \ket{\psi}} \right) \right) \,.
\end{equation}

To establish logarithmic scaling of $n_b$, we need only prove
polynomial upper bounds on $\bra{\psi} \phi(\mathbf{x})^2 \ket{\psi}$
and $\bra{\psi} \pi(\mathbf{x})^2 \ket{\psi}$. Rather than making a physical
estimate of these expectation values, we prove simple
upper bounds that are probably quite loose. In the adiabatic state
preparation described in \sect{preparing}, the parameters
$m_0^2$ and $\lambda_0$ are varied. The following two propositions
cover all the combinations of parameters used in the adiabatic
preparation and subsequent scattering of both strongly and weakly
coupled wavepackets.

\begin{proposition}
\label{mpos}
Let $H$ be of the form shown in \eq{HS}. Suppose $m_0^2 > 0$ and
$\lambda_0 \geq 0$. Let $\ket{\psi}$ be any state of the field such
that $\bra{\psi} H \ket{\psi} \leq E$. Then $\forall \mathbf{x} \in
\Omega$,
\begin{eqnarray}
\label{phiboundpos}
\bra{\psi} \phi(\mathbf{x})^2 \ket{\psi} & \leq & \frac{2E}{a^d m_0^2} \,,\\
\label{piboundpos}
\bra{\psi} \pi(\mathbf{x})^2 \ket{\psi} & \leq & \frac{2E}{a^d} \,.
\end{eqnarray}
\end{proposition}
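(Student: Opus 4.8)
The plan is to exploit the fact that, under the stated hypotheses $m_0^2 > 0$ and $\lambda_0 \geq 0$, the Hamiltonian in \eq{HS} is a sum of positive semidefinite operators; the expectation value of any single summand can then be at most the total energy $E$. Discarding the unwanted (non-negative) summands leaves precisely the bounds \eq{phiboundpos} and \eq{piboundpos}.

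First I would check term-by-term positivity. For each $\mathbf{y} \in \Omega$, the operators $\pi(\mathbf{y})^2$, $(\nabla_a \phi)^2(\mathbf{y})$, $\phi(\mathbf{y})^2$, and $\phi(\mathbf{y})^4$ are all positive semidefinite: the first three are squares of Hermitian operators (in particular the finite difference $\nabla_a \phi$ is a real linear combination of the Hermitian field operators, hence itself Hermitian), and $\phi(\mathbf{y})^4 = (\phi(\mathbf{y})^2)^2$ is an even power. Because $a^d > 0$, $m_0^2 > 0$, and $\lambda_0 \geq 0$, every coefficient multiplying these operators is non-negative, so $H$ is manifestly a sum of positive semidefinite terms.

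Next, to bound the field fluctuation at a fixed site $\mathbf{x}$, I would split off the single summand $a^d \frac{1}{2} m_0^2 \phi(\mathbf{x})^2$ and write $H = a^d \frac{1}{2} m_0^2 \phi(\mathbf{x})^2 + R$, where $R$ collects every remaining term. By the previous step $R$ is positive semidefinite, so $\bra{\psi} R \ket{\psi} \geq 0$, and therefore
\begin{equation}
E \;\geq\; \bra{\psi} H \ket{\psi} \;\geq\; a^d \frac{1}{2} m_0^2 \, \bra{\psi} \phi(\mathbf{x})^2 \ket{\psi} \,.
\end{equation}
Rearranging gives \eq{phiboundpos}. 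The identical argument, isolating instead the summand $a^d \frac{1}{2} \pi(\mathbf{x})^2$ and absorbing everything else into a positive semidefinite remainder, yields \eq{piboundpos}.

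I do not expect a genuine obstacle: the whole argument rests on the non-negativity of each term, which is exactly why the hypotheses $m_0^2 > 0$ and $\lambda_0 \geq 0$ are imposed — they ensure no summand carries a negative coefficient, so the quartic and gradient terms can simply be thrown away. The only point requiring a moment's care is confirming that the discretized kinetic term $(\nabla_a \phi)^2$ is positive semidefinite, and this is immediate once one observes that the finite-difference operator is Hermitian.
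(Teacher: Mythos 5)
Your proof is correct and follows essentially the same route as the paper's: both observe that under $m_0^2 > 0$, $\lambda_0 \geq 0$ every summand of $H$ is positive semidefinite, then bound $E \geq \bra{\psi} H \ket{\psi}$ and discard all terms except $a^d \frac{m_0^2}{2}\phi(\mathbf{x})^2$ (respectively $a^d \frac{1}{2}\pi(\mathbf{x})^2$) to obtain each bound. Your explicit verification that each term, including the discretized gradient term, is a square of a Hermitian operator is a slightly more careful spelling-out of the paper's one-line justification, but the argument is identical.
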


\begin{proof}
\begin{eqnarray}
E & \geq & \bra{\psi} H \ket{\psi} \\
  & = & 
\label{almost}
  \bra{\psi} \sum_{\mathbf{x} \in \Omega} a^d \left[ \frac{1}{2}
  \pi(\mathbf{x})^2 + \frac{1}{2} (\nabla_a \phi)^2(\mathbf{x}) +
  \frac{m_0^2}{2} \phi(\mathbf{x})^2 + \frac{\lambda_0}{4!}
  \phi(\mathbf{x})^2 \right] \ket{\psi} \\
& \geq & \bra{\psi} a^d \frac{m_0^2}{2} \phi(\mathbf{x})^2 \ket{\psi},
\end{eqnarray}
where the last inequality follows because all of the operators we have
dropped are positive semidefinite. This establishes
\eq{phiboundpos}. Similarly, we can drop all but the
$\pi(\mathbf{x})$ term from the right-hand side of \eq{almost},
leaving
\begin{equation}
E \geq \bra{\psi} a^d \frac{1}{2} \pi(\mathbf{x})^2 \ket{\psi} \,,
\end{equation}
which establishes \eq{piboundpos}.
\end{proof}

\begin{proposition}
\label{mneg}
Let $H$ be of the form shown in \eq{HS}. Suppose $m_0^2 \leq 0$ and
$\lambda_0 > 0$. Let $\ket{\psi}$ be any state of the field such that
$\bra{\psi} H \ket{\psi} \leq E$. Then $\forall \mathbf{x} \in
\Omega$,
\begin{eqnarray}
\label{phiboundneg}
\bra{\psi} \phi(\mathbf{x})^2 \ket{\psi} & \leq &
-\frac{24m_0^2}{\lambda_0} + \sqrt{\frac{36
    m_0^4}{\lambda_0^2}+\frac{24}{\lambda_0 a^d} \left(
  E + \frac{3(V-a^d)m_0^4}{2\lambda_0} \right)} \,,
\\
\label{piboundneg}
\bra{\psi} \pi(\mathbf{x})^2 \ket{\psi} & \leq & \frac{2}{a^d} \left(
E + \frac{3V m_0^4}{2 \lambda_0} \right) \,,
\end{eqnarray}
where $V$ is the physical volume.
\end{proposition}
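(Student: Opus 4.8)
The plan is to adapt the proof of Proposition~\ref{mpos}. The obstruction is that when $m_0^2 \le 0$ the on-site term $\tfrac12 m_0^2\phi(\mathbf x)^2$ is no longer positive semidefinite, so it cannot simply be discarded as in the $m_0^2>0$ case. The rescue is that for $\lambda_0>0$ the full on-site potential is bounded below; completing the square gives
\[ \tfrac12 m_0^2\phi^2 + \tfrac{\lambda_0}{4!}\phi^4 = \tfrac{\lambda_0}{24}\Big(\phi^2 + \tfrac{6m_0^2}{\lambda_0}\Big)^2 - \tfrac{3m_0^4}{2\lambda_0} \ \ge\ -\tfrac{3m_0^4}{2\lambda_0}, \]
pointwise in $\phi$, hence in expectation at every site. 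This well-depth $\tfrac{3m_0^4}{2\lambda_0}$ is exactly the constant appearing in both claimed bounds.

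For \eq{piboundneg} I would proceed exactly as in Proposition~\ref{mpos}: starting from $E \ge \bra{\psi}H\ket{\psi}$, keep the single term $a^d\tfrac12\langle\pi(\mathbf x)^2\rangle$, drop all remaining $\pi$-terms and every $(\nabla_a\phi)^2$-term (each positive semidefinite), and replace each of the $\mathcal V$ on-site potentials by its lower bound $-\tfrac{3m_0^4}{2\lambda_0}$. Since $\sum_{\mathbf y\in\Omega} a^d = V$, rearranging immediately yields $\langle\pi(\mathbf x)^2\rangle \le \tfrac{2}{a^d}\big(E + \tfrac{3Vm_0^4}{2\lambda_0}\big)$.

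For \eq{phiboundneg} I would instead retain the \emph{entire} on-site potential at the chosen site $\mathbf x$, drop the (positive) kinetic and gradient terms, and bound the potentials at the other $\mathcal V-1$ sites below by $-\tfrac{3m_0^4}{2\lambda_0}$; since $(\mathcal V-1)a^d = V-a^d$, this produces the upper bound $\langle V(\phi(\mathbf x))\rangle \le \tfrac{1}{a^d}\big(E + \tfrac{3(V-a^d)m_0^4}{2\lambda_0}\big)$. The crux is then to convert this constraint on a quartic moment into one on $u \equiv \langle\phi(\mathbf x)^2\rangle$. Writing $V$ in completed-square form and applying Proposition~\ref{moments} to the Hermitian operator $M = \phi(\mathbf x)^2 + \tfrac{6m_0^2}{\lambda_0}\id$ gives $\langle M^2\rangle \ge \langle M\rangle^2$, i.e. $\big\langle(\phi^2+\tfrac{6m_0^2}{\lambda_0})^2\big\rangle \ge (u+\tfrac{6m_0^2}{\lambda_0})^2$. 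Substituting turns the energy inequality into the quadratic inequality $(u+\tfrac{6m_0^2}{\lambda_0})^2 \le \tfrac{36m_0^4}{\lambda_0^2} + \tfrac{24}{\lambda_0 a^d}\big(E+\tfrac{3(V-a^d)m_0^4}{2\lambda_0}\big)$, whose larger root is the right-hand side of \eq{phiboundneg}.

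The main obstacle is precisely this conversion: unlike the $m_0^2>0$ case, no term-dropping alone bounds $\langle\phi^2\rangle$, since the quantity the energy directly controls is the quartic moment $\langle\phi^4\rangle$. The device that closes the gap is applying the variance inequality of Proposition~\ref{moments} not to $\phi^2$ but to the shifted operator $\phi^2 + \tfrac{6m_0^2}{\lambda_0}\id$ that completes the square, which collapses the quartic constraint to a solvable quadratic in $\langle\phi^2\rangle$. I note that the natural derivation produces leading coefficient $-\tfrac{6m_0^2}{\lambda_0}$, consistent with the $36=6^2$ under the root; since $m_0^2\le0$ and $\lambda_0>0$ give $-\tfrac{24m_0^2}{\lambda_0}\ge-\tfrac{6m_0^2}{\lambda_0}$, the stated bound holds a fortiori (the larger coefficient only loosens it).
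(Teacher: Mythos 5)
Your proposal is correct and matches the paper's proof essentially step for step: the paper likewise bounds the on-site potential $U(\mathbf{x})=\tfrac{m_0^2}{2}\phi^2+\tfrac{\lambda_0}{4!}\phi^4$ below by its minimum $-\tfrac{3m_0^4}{2\lambda_0}$ (your completed square), drops positive semidefinite terms to obtain both the $\pi$ bound and the single-site quartic constraint with the same $(\mathcal{V}-1)a^d=V-a^d$ bookkeeping, and applies Proposition~\ref{moments} with $M=\phi(\mathbf{x})^2$ --- equivalent to your shifted $M=\phi^2+\tfrac{6m_0^2}{\lambda_0}\id$, since adding a multiple of the identity leaves the variance inequality unchanged --- before solving the resulting quadratic. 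Your closing remark is also apt: the quadratic formula indeed produces leading coefficient $-\tfrac{6m_0^2}{\lambda_0}$ (consistent with the $36$ under the root), and the stated $-\tfrac{24m_0^2}{\lambda_0}$ is simply a weaker bound that holds a fortiori because $m_0^2\le 0$ and $\lambda_0>0$.
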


\begin{proof}
The operator
\begin{equation}
U(\mathbf{x}) = \frac{m_0^2}{2} \phi(\mathbf{x})^2 +
\frac{\lambda_0}{4!} \phi(\mathbf{x})^4
\end{equation}
is sufficiently simple that we can directly calculate its minimal
eigenvalue $U_{\min}$. If $m_0^2 \leq 0$ and $\lambda > 0$, then
\begin{equation}
\label{Vmin}
U_{\min} = - \frac{3 m_0^4}{2 \lambda_0} \,.
\end{equation}
Thus, for \emph{any} state $\ket{\psi}$,
\begin{equation}
\label{anystate}
\bra{\psi} \sum_{\mathbf{x} \in \Omega} a^d U(\mathbf{x}) \ket{\psi}
\geq \frac{-3Vm_0^4}{2 \lambda_0} \,.
\end{equation}
Hence, recalling \eq{HS}, we obtain
\begin{eqnarray}
E & \geq & \bra{\psi} H \ket{\psi} \\
& = & 
\label{beginning}
\bra{\psi} \sum_{\mathbf{x} \in \Omega} a^d \left[ \frac{1}{2}
  \pi(\mathbf{x})^2 + \frac{1}{2} ( \nabla_a \phi)^2(\mathbf{x}) +
  \frac{m_0^2}{2} \phi(\mathbf{x})^2 + \frac{\lambda_0}{4!}
  \phi(\mathbf{x})^4 \right] \ket{\psi} \\
& \geq &
\label{secondtolast}
 \bra{\psi} \sum_{\mathbf{x} \in \Omega} a^d \left[
  \frac{1}{2} \pi(\mathbf{x})^2 + \frac{1}{2} (\nabla_a
  \phi)^2(\mathbf{x}) \right] \ket{\psi} - \frac{3V m_0^4}{2 \lambda_0}
\\
& \geq & 
\label{last}
\bra{\psi} \frac{a^d}{2} \pi(\mathbf{x})^2 \ket{\psi} -
\frac{3Vm_0^4}{2 \lambda_0} \,.
\end{eqnarray}
\eq{secondtolast} follows from \eq{anystate}. \eq{last} holds
(for any choice of $\mathbf{x}$) because all of the operators we have
dropped are positive semidefinite. This establishes
\eq{piboundneg}.

Similarly, dropping positive operators from \eq{beginning} and
using \eq{anystate} yield, for any $\mathbf{x}$,
\begin{equation}
a^d \bra{\psi} \left( \frac{m_0^2}{2} \phi(\mathbf{x})^2 +
\frac{\lambda_0}{4!} \phi(\mathbf{x})^4 \right) \ket{\psi} \leq
\left( E + \frac{3(V-a^d)m_0^4}{2\lambda_0} \right) \,.
\end{equation}
Applying Proposition \ref{moments} with $M = \phi(\mathbf{x})^2$ shows
that $\bra{\psi} \phi(\mathbf{x})^4 \ket{\psi} \geq \bra{\psi}
  \phi(\mathbf{x})^2 \ket{\psi}^2$. Thus,
\begin{equation}
a^d \left[ \frac{m_0^2}{2} \bra{\psi} \phi(\mathbf{x})^2 \ket{\psi} +
\frac{\lambda_0}{4!} \bra{\psi} \phi(\mathbf{x})^2 \ket{\psi}^2
\right] \leq \left( E + \frac{3(V-a^d)m_0^4}{2\lambda_0} \right) \,.
\end{equation}
Via the quadratic formula, this implies \eq{phiboundneg}.
\end{proof}


\subsection{Adiabatic Preparation of Interacting Wavepackets}
\label{preparing}

In this section, we analyze the adiabatic state-preparation procedure.
To analyze the error due to finite
$\tau$ and $J$, we consider the process of preparing a single-particle
wavepacket. The procedure
performs similarly in preparing wavepackets for multiple particles
provided the particles are separated by more than the characteristic
length $1/m$ of the interaction.

The phase induced by $M_j$ on the momentum-$p$ eigenstate of $H(s)$
(with energy $E_p(s)$) is
\begin{equation}
\theta_j(p) = \left( E_p \left( \frac{j+1}{J} \right) + E_p \left(
\frac{j}{J} \right) \right) \frac{\tau}{2J} - \tau
\int_{j/J}^{(j+1)/J} ds E_p(s) \,.
\end{equation}
Taylor expanding $E_p$ about $s=(j+\frac{1}{2})/J$ yields
\begin{equation}
\label{thetaj}
\theta_j(p) = \frac{\tau}{12 J^3} \frac{\partial^2 E_p}{\partial s^2} 
+ O(J^{-5}) \,.
\end{equation}
Thus the total phase induced is
\begin{eqnarray}
\theta(p) & = & \sum_{j=0}^{J-1} \theta_j(p) \\
& \simeq & \frac{\tau}{12 J^2} \int_0^1 ds \frac{\partial^2
  E_p}{\partial s^2} \\
& = & \frac{\tau}{12 J^2} \left. \frac{\partial E_p}{\partial s}
\right|_0^1 \,, \label{thetap1}
\end{eqnarray}
where the approximation holds for large $J$. For a Lorentz-invariant
theory, $E_p(s)$ must take the form
\begin{equation}
\label{LI}
E_p(s) = \sqrt{p^2+m^2(s)} \,.
\end{equation}
This should be a good approximation for the lattice theory provided
the particle momentum satisfies $p \ll 1/a$. Substituting \eq{LI}
into \eq{thetap1} yields
\begin{equation}
\label{thetap2}
\theta(p) \simeq \frac{\tau}{24 J^2} \left. \frac{\frac{\partial
    m^2}{\partial{s}}}{\sqrt{p^2 + m^2(s)}} \right|_0^1 \,.
\end{equation}

Next, we consider the effect of this phase shift on a wavepacket
centered around momentum $\bar{p}$. If the wavepacket is narrowly
concentrated in momentum, then we can Taylor expand $\theta(p)$ to
first order about $\bar{p}$:
\begin{equation}
\theta(p) \simeq \theta(\bar{p}) + \mathcal{D} \cdot (p-\bar{p}) \,,
\end{equation}
where
\begin{equation}
\mathcal{D} = \left. \frac{\partial \theta}{\partial p}
\right|_{\bar{p}} \,. \label{D}
\end{equation}
The phase shift $e^{i \mathcal{D} \cdot (p - \bar{p})}$ induces a 
translation (in position space) of any wavepacket by a distance $\mathcal{D}$. 
(The second-order term in the Taylor expansion induces
broadening.) From \eq{D} and \eq{thetap2}, we have
\begin{equation}
\label{phasecrit}
\mathcal{D} \simeq \left| \frac{\tau |\bar{p}| }{24 J^2}
\left. \frac{\frac{\partial m^2} {\partial  s}}{\left( \bar{p}^2 +
  m^2(s) \right)^{3/2}} \right|_{s=0}^{s=1} \right| \,.
\end{equation}

We next determine the complexity by demanding that the propagation length
$\mathcal{D}$ be restricted to some small constant, and that the
probability of diabatic particle creation be small. Together, these
criteria determine $J$ and $\tau$. We can obtain a tighter bound in
the perturbative case than in the general case, so we treat these
separately.

\begin{figure}
\begin{center}
\includegraphics[width=0.33\textwidth]{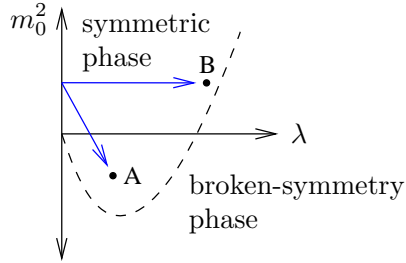}
\caption{\label{paths} The dashed line illustrates schematically the
  location of a quantum phase transition of $\phi^4$ theory in two and
  three spacetime dimensions. A and B denote weakly and strongly coupled
  continuum-like theories, respectively. We prepare them adiabatically
  by following the arrows starting from the massive free theory
  ($m_0^2 >0$, $\lambda_0 = 0$). To maintain adiabaticity, the path
  must not cross the quantum phase transition.}
\end{center}
\end{figure}

\subsubsection{Weak Coupling}
\label{weak}

In the perturbative continuum limit $a \to 0$, $m_0^2$ is
negative. For fixed small $a$, we can adiabatically approach a
perturbative continuum-like theory by taking the straight-line path
depicted in Fig.~\ref{paths}, namely, the following parameterization of
\eq{HS}:
\begin{eqnarray}
m_0^2(s) & = & (m^{(1)})^2 + s \lambda_0 \mu \,, \nonumber \\
\lambda_0(s) & = & s \lambda_0 \label{weakpath} \,.
\end{eqnarray}
Using perturbation theory (see diagram~\ref{diag1}), one finds that it
is particularly efficient to choose 
\begin{equation}
\mu = 
\left\{
\begin{array}{ll}
-\frac{1}{8\pi} \log\Big(\frac{64}{m^2a^2}\Big) 
+ \cdots \,,
 & d=1 \,,\\
[5pt]
-\frac{r_0^{(2)}}{16\pi^2}\frac{1}{a} 
+ \cdots \,, 
& d=2 \,,\\
[5pt]
-\frac{r_0^{(3)}}{32\pi^3}\frac{1}{a^2}
+ \cdots \,, 
& d=3 \,,
\end{array}
\right.
   \label{eq:path}
\end{equation}
so that, at first order in $\lambda_0$, the physical mass remains
fixed at $m^{(1)}$ for all $s$. 
Here, $r_0^{(2)} = 25.379\ldots $ and $r_0^{(3)} = 112.948\ldots $.
In the perturbative regime, this
should ensure that the path does not cross the quantum phase
transition.

To calculate the variation of physical mass with $s$, we must go to
second order in $\lambda_0$ (see diagram~\ref{diag2}). The result is 
\begin{equation}
\label{uptosecond}
 m^2(s) = (m^{(1)} )^2 + s^2 m_2^2 + O(\lambda_0^3)\,,
\end{equation}
where
\begin{equation}
\label{m22}
m_2^2 = \left\{ \begin{array}{ll} O \left( \lambda_0^2/
(m^{(1)})^2 \right) \,, &
d=1 \,,
\vspace{4pt}\\
O\left(\lambda_0^2 \log (m^{(1)}a)\right) \,, & 
d=2 \,,
\vspace{4pt}\\
O(\lambda_0^2/a^2) \,, & 
d=3 \,.
\end{array} \right.
\end{equation}
Substituting \eq{uptosecond} into \eq{phasecrit} yields 
\begin{equation}
\label{phasecrit2}
\frac{\tau |\bar{p}|}{12 J^2} \frac{m_2^2}{\left( \bar{p}^2 +
  (m^{(1)})^2 + m_2^2 \right)^{3/2} } \leq \mathcal{D} \,.
\end{equation}
If we are considering a fixed physical process and using successively
smaller $a$ to achieve higher precision then, by \eq{m22},
it suffices to choose $J$ to scale as
\begin{equation}
\label{Jscale}
J = \left\{ \begin{array}{ll} 
\tilde{O} \left(\sqrt{\frac{m^{(1)} \tau}{\lambda_0 \mathcal{D}}}
\right) \,, & d=1 \,, \vspace{5pt} \\
\tilde{O} \left( \sqrt{\frac{\tau}{\lambda_0 \mathcal{D}}} \right) \,, 
& d=2 \,, \vspace{5pt} \\
\tilde{O} \left(  \sqrt{\frac{a \tau}{\lambda_0 \mathcal{D}}}
\right) \,, & d = 3 \,. \end{array} \right.
\end{equation}
Note that, for $d=3$, $J$ is suppressed by $\sqrt{a}$. This is because,
as $s$ increases, the (uncancelled) two-loop contribution to the physical
mass makes the particle very heavy until $s$ is very close to one.
Hence, the particle propagates slowly, and less backward evolution is
required.

To determine $\tau$, we next consider adiabaticity. Let $H(s)$ be any 
Hamiltonian differentiable with respect to $s$. Let
$\ket{\phi_l(s)}$ be an eigenstate $H(s) \ket{\phi_l(s)} = E_l(s)$
separated by a non-zero energy gap for all $s$. Let $\ket{\psi_l(t)}$
be the state obtained by Schr\"odinger time evolution according to
$H(t/\tau)$ with initial condition $\ket{\psi_l(0)} =
\ket{\phi_l(0)}$. The diabatic transition amplitude to any other eigenstate 
$H(s) \ket{\phi_k(s)} = E_k(s)
\ket{\phi_k(s)}$ ($k \neq l$) is \cite{Messiah}
\begin{equation}
\label{traditional_integral}
\braket{\phi_k(s)}{\psi_l(\tau s)} \sim \int_{0}^{s} d \sigma
\frac{\bra{\phi_k(\sigma)} \frac{dH}{ds} \ket{\phi_l(\sigma)}}{E_l(\sigma)-E_k(\sigma)}
e^{i \tau (\varphi_k(\sigma)-\varphi_l(\sigma))} \left( 1 + O(1/\tau)
\right) \,.
\end{equation}
(The integrand is made well-defined by the phase convention
$\bra{\phi_k} \frac{d \ket{\phi_k}}{ds} = 0$.) Here,
\begin{equation}
\varphi_l(s) = \int_{0}^s d \sigma E_l(\sigma) \,.
\end{equation}
In the case that $E_l$, $E_k$, and $\bra{\phi_k} \frac{dH}{ds}
\ket{\phi_l}$ are $s$-independent, this integral gives
\begin{equation}
\label{traditional}
\braket{\phi_k(s)}{\psi_l(\tau s)} \sim \left( 1 - e^{i \tau
  (E_k-E_l)s} \right) \frac{\bra{\phi_k}
  \frac{dH}{ds} \ket{\phi_l}}{-i \tau (E_k-E_l)^2} (1 + O(1/\tau^2)) \,.
\end{equation}
In the case that these quantities are approximately $s$-independent,
\eq{traditional} should hold as an approximation. 

In reality, we wish to prepare a wavepacket state, not an
eigenstate. However, the wavepacket is well separated from other
particles and narrowly concentrated in momentum space. Thus, we shall
approximate it as an eigenstate $\ket{\phi_l(s)}$. Furthermore, by our
choice of path, the energy gap is kept constant to first order in the
coupling, and thus \eq{traditional} should be a good approximation
to \eq{traditional_integral}.

Summing the transition amplitudes to some state $\ket{\phi_k}$ from
the $J$ steps in in our preparation process, and applying the
triangle inequality\footnote{The $O(J)$ scaling obtained by the
  triangle inequality can be confirmed by a more detailed calculation
  taking into account the relative phases of the contributions to the
  total transition amplitude.} yield the following:
\begin{equation} 
\label{totaldiabatic}
\left| \braket{\phi_k}{\psi_l(\tau)} \right| = O \left(
\frac{1}{\tau} \sum_{j=0}^J \left| \frac{\bra{\phi_k(j/J)} \frac{dH}{ds}
  \ket{\phi_l(j/J)}}{(E_k(j/J) - E_l(j/J))^2} \right| \right) \,.
\end{equation}

The $j=0$ term in this sum can be evaluated exactly, because it arises
from the free theory. At $j \neq 0$ the theory is no longer exactly
solvable. However, one obtains the lowest-order contribution to the matrix
element $\bra{\mathbf{p}_1,\mathbf{p}_2,\mathbf{p}_3,\mathbf{p}_4;s=1} 
\phi^4 \ket{\mathrm{vac}(1)}$ in renormalized perturbation theory
simply by taking the $j=0$ expression and replacing $m_0$
with the physical mass and $\lambda_0$ with the physical coupling. Our
adiabatic path \eq{weakpath} is designed so that the physical mass
at $s=1$ matches the bare mass at $j=0$ (at least to first order in
$\lambda_0$). Furthermore, the physical coupling differs from the bare 
coupling only by a logarithmically divergent (in $a$) correction for 
$d=3$ and non-divergent corrections for $d=1,2$. Thus we can make the 
following approximation:
\begin{equation}
\label{summedadiabatic}
\left| \braket{\phi_k}{\psi_l(\tau)} \right| = \tilde{O} \left(
\frac{J}{\tau} \left| \frac{\bra{\phi_k(0)} \frac{dH}{ds}
  \ket{\phi_l(0)}}{(E_k(0) - E_l(0))^2} \right| \right) \,.
\end{equation}

Diabatic errors come in two types, creation of particles from the
vacuum, and splitting of the incoming particles. The matrix element in the
numerator of \eq{summedadiabatic} can correspondingly be decomposed
as the sum of two contributions. We first consider particle creation
from the vacuum, approximating $\ket{\phi_j(s)}$ as
$\ket{\mathrm{vac}(s)}$. 

By \eq{weakpath},
\begin{equation}
\frac{dH}{ds} = \sum_{\mathbf{x} \in \Omega} a^d \left[
  \frac{\lambda_0}{4!} \phi^4(\mathbf{x}) + \lambda_0 \mu
  \phi^2(\mathbf{x}) \right] \,.
\end{equation}
Substituting this into the numerator of \eq{summedadiabatic},
setting $\ket{\phi_l(0)} = \ket{\mathrm{vac}(0)}$, and expanding
$\phi$ in terms of creation and annihilation operators show that the
only potentially non-zero transition amplitudes are to states
$\ket{\phi_k(0)}$ of two or four particles. The transition amplitude
to states of four particles arise solely from the $\phi^4$ term in
$\frac{dH}{ds}$. The transition amplitude to states of two particles
has contributions from the $\phi^4$ term and the $\phi^2$ term in
$\frac{dH}{ds}$. These actually cancel, because of our choice of $\mu$. 
(Note that this requires tuning of $\mu$.)
At $s=0$, the numerator of \eq{summedadiabatic} is therefore the
following: 
\begin{equation}
\label{treelevel}
\bra{\mathbf{p}_1,\mathbf{p}_2,\mathbf{p}_3,\mathbf{p}_4} 
\frac{\lambda_0}{4!}
\sum_{\mathbf{x} \in \Omega} a^d \phi^4(\mathbf{x})
\ket{\mathrm{vac}(0)} = \frac{ \lambda_0
  \delta_{\mathbf{p}_1+\mathbf{p}_2+\mathbf{p}_3+\mathbf{p}_4,0}}
{4 V \sqrt{\omega(\mathbf{p}_1) \omega(\mathbf{p}_2)
    \omega(\mathbf{p}_3) \omega(\mathbf{p}_4)}} \,.
\end{equation}
We obtain the probability of excitation due to creation of four
particles from the vacuum by squaring the amplitude estimated above,
and then summing over all allowed combinations of the four outgoing
momenta: 
\begin{equation}
P_{\mathrm{create}} \sim 
\sum_{\mathbf{p}_1,\mathbf{p}_2,\mathbf{p}_3,\mathbf{p}_4 \in \Gamma}
\frac{J^2 \lambda_0^2
  \delta_{\mathbf{p}_1+\mathbf{p}_2+\mathbf{p}_3+\mathbf{p}_4,0}}
{V^2 \tau^2 (\omega(\mathbf{p}_1)+\omega(\mathbf{p}_2) +
  \omega(\mathbf{p}_3) + \omega(\mathbf{p}_4))^4 \omega(\mathbf{p}_1)
  \omega(\mathbf{p}_2) \omega(\mathbf{p}_3) \omega(\mathbf{p}_4)} \,.
\end{equation}
This sum is difficult to evaluate exactly; instead, we shall simply
estimate its asymptotic scaling. The question is, with which
parameter should we consider scaling? 
There are at least three regimes in which classical methods for computing
scattering amplitudes break down or are inefficient: strong coupling, 
large numbers of external particles, and high precision. 
In this section we are considering only weak coupling 
(that is, $\lambda/m^{4-D} \ll 1$), leaving discussion of strong coupling
until the next section. 
For an asymptotically large number of external particles, the efficiency
of our algorithm depends upon strong coupling, for the following reason.
A connected Feynman diagram involving $n$ external particles must have 
at least $v=O(n)$ vertices, so the amplitude for such a process
is suppressed by a factor of $\left( \frac{\lambda}{E^{4-D}}
\right)^v$, where $E$ is the energy scale of the process. Since 
$E \geq m$, many-particle scattering events are exponentially rare
at weak coupling, and thus cannot be efficiently observed in
experiments or simulations. This leaves the high-precision
frontier. Recall that the perturbation series used in quantum field 
theory are asymptotic but not convergent. Thus, perturbative methods 
cannot be extended to arbitrarily high precision. 

Hence, in this section we consider the quantum gate complexity of 
achieving arbitrarily high precision. 
To do so, one chooses $a$ small to obtain
small discretization errors, $V$ large to obtain better
particle separation, $\tau$ long to improve adiabaticity, and $J$
large enough to limit unwanted particle propagation as the interaction
is turned on. Thus, we wish to know the scaling
of $P_{\mathrm{create}}$ with $a$, $\tau$, $V$, and $J$. In this
context, we consider 
$m$, $\lambda$, and $|\mathbf{p}_1|$ to be constants.

We now estimate the scaling of $P_{\mathrm{create}}$ as $a \to 0$.
\begin{eqnarray}
P_{\mathrm{create}} & \sim & \frac{J^2}{V^2 \tau^2}
\sum_{\mathbf{p}_1,\mathbf{p}_2,\mathbf{p}_3  \in \Gamma} \nonumber \\
&  & \frac{\lambda_0^2}
{(\omega(\mathbf{p}_1)+\omega(\mathbf{p}_2)+\omega(\mathbf{p}_3) +
  \omega(-\mathbf{p}_1-\mathbf{p}_2-\mathbf{p}_3))^4
  \omega(\mathbf{p}_1) \omega(\mathbf{p}_2) \omega(\mathbf{p}_3)
 \omega(-\mathbf{p}_1-\mathbf{p}_2-\mathbf{p}_3)} \nonumber \\
& \simeq & \frac{3 J^2}{V^2 \tau^2} 
\sum_{\substack{\mathbf{p}_1,\mathbf{p}_2,\mathbf{p}_3 \in \Gamma \\ |\mathbf{p}_1| > |\mathbf{p}_2|,|\mathbf{p}_3|}} \nonumber \\
&  &
\frac{\lambda_0^2}
 {(\omega(\mathbf{p}_1)+\omega(\mathbf{p}_2)+\omega(\mathbf{p}_3)+\omega(-\mathbf{p}_1-\mathbf{p}_2-\mathbf{p}_3))^4
  \omega(\mathbf{p}_1) \omega(\mathbf{p}_2) \omega(\mathbf{p}_3) \omega(-\mathbf{p}_1-\mathbf{p}_2-\mathbf{p}_3)} \nonumber \\
& \sim & \frac{J^2}{V^2 \tau^2} 
\sum_{\substack{\mathbf{p}_1,\mathbf{p}_2,\mathbf{p}_3 \in \Gamma \\ |\mathbf{p}_1| > |\mathbf{p}_2|,|\mathbf{p}_3|}}
\frac{\lambda_0^2}
 {\omega(\mathbf{p}_1)^6 \omega(\mathbf{p}_2) \omega(\mathbf{p}_3)} \nonumber \\
& \leq & \frac{J^2}{V^2 \tau^2}
 \sum_{\mathbf{p}_1,\mathbf{p}_2,\mathbf{p}_3 \in \Gamma} 
\frac{\lambda_0^2}
 {\omega(\mathbf{p}_1)^6 \omega(\mathbf{p}_2) \omega(\mathbf{p}_3)} \nonumber \\
& \simeq & \frac{V J^2}{\tau^2} \int_{\Gamma} d^d p_1 \int_{\Gamma}
  d^d p_2 \int_{\Gamma} d^d p_3 \frac{\lambda_0^2}{\omega(\mathbf{p}_1)^6
  \omega(\mathbf{p}_2) \omega(\mathbf{p}_3)} \nonumber \\
& = & 
\label{pcreate}
\left\{ \begin{array}{ll} \tilde{O} \left(
  \frac{V J^2}{\tau^2} \right) \,, & d=1,2 \,, \vspace{5pt} \\
\tilde{O} \left( \frac{V J^2}{\tau^2 a} \right) \,, & d=3 \,.
\end{array} \right.
\end{eqnarray}

By \eq{Jscale} and \eq{pcreate}, 
\begin{equation}
P_{\mathrm{create}}= \begin{array}{ll}
\tilde{O} \left( \frac{V}{\tau} \right) \,, & d=1,2,3 \,.
\end{array}
\end{equation}

Next, we consider the process in which the time dependence of the
$\phi^4$ term causes a single particle to split into three. For this
process, the relevant matrix element is
\begin{equation}
\bra{\mathbf{p}_2,\mathbf{p}_3,\mathbf{p}_4} \frac{\lambda_0}{4 !} 
\sum_{\mathbf{x} \in \Omega} 
a^d \phi^4(\mathbf{x}) \ket{\mathbf{p}_1} = \frac{ \lambda_0
  \delta_{\mathbf{p}_2+\mathbf{p}_3+\mathbf{p}_4,\mathbf{p}_1}}
{4 V \sqrt{\omega(\mathbf{p}_1) \omega(\mathbf{p}_2)
    \omega(\mathbf{p}_3) \omega(\mathbf{p}_4)}} \,,
\end{equation}
where $\mathbf{p}_1$ is the momentum of the incoming particle. 
By our choice of path, the physical mass is $s$-independent to
first order in the coupling, and the $s$ dependence of the 
coupling is only logarithmically divergent as $a \to 0$. Thus,
by \eq{totaldiabatic},
\begin{equation}
\label{Psplitdef}
P_{\mathrm{split}} \sim \frac{J^2}{\tau^2 V^2}
\sum_{\mathbf{p}_2,\mathbf{p}_3,\mathbf{p}_4 \in \Gamma}
\frac{\lambda_0^2 \delta_{\mathbf{p}_2+\mathbf{p}_3+\mathbf{p}_4,\mathbf{p}_1}}
{(\omega(\mathbf{p}_2)+\omega(\mathbf{p}_3)+\omega(\mathbf{p}_4)-\omega(\mathbf{p}_1))^4 
\omega(\mathbf{p}_1) \omega(\mathbf{p}_2) \omega(\mathbf{p}_3) \omega(\mathbf{p}_4)} \,.
\end{equation}

Let us now examine the divergence structure of $P_{\mathrm{split}}$ as
$a \to 0$. In the limit of large volume, the sum converges to the
following integral:
\begin{equation}
\frac{2J^2}{\tau^2} \int_\Gamma d^d p_2
\int_\Gamma d^d p_3 \frac{\lambda_0^2}{(\omega(\mathbf{p}_2)+\omega(\mathbf{p}_3) +
  \omega(\mathbf{p}_1-\mathbf{p}_2-\mathbf{p}_3) - \omega(\mathbf{p}_1))^4 \omega(\mathbf{p}_1) \omega(\mathbf{p}_2)
  \omega(\mathbf{p}_3) \omega(\mathbf{p}_1-\mathbf{p}_2-\mathbf{p}_3)} \,. \\
\end{equation}
If this were divergent as $a \to 0$, then by approximating the integrand 
with its value at large
$|\mathbf{p}_2|$ and $|\mathbf{p}_3|$, we would be able to isolate the
divergence:
\begin{equation}
P_{\mathrm{split}} \sim \frac{J^2 \lambda_0^2}{\tau^2
  \omega(\mathbf{p}_1)} \int_\Gamma d^d p_2 \int_\Gamma d^d p_3
\frac{1}{(|\mathbf{p}_2|+|\mathbf{p}_3|+|\mathbf{p}_2+\mathbf{p}_3|)^4
  |\mathbf{p}_2| |\mathbf{p}_3| |\mathbf{p}_2+\mathbf{p}_3|} \,.
\end{equation}
However, for $d=1,2,3$ this is convergent as $a \to 0$. Thus, recalling 
\eq{Jscale}, we obtain
\begin{equation}
P_{\mathrm{split}} = O \left( \frac{J^2}{\tau^2} \right) =
\left\{ \begin{array}{ll} \tilde{O} \left( \frac{1}{\tau} \right) \,, 
& d=1,2 \,,\\
\tilde{O} \left( \frac{a}{\tau} \right) \,, & d=3 \,.
\end{array} \right.
\end{equation}

We can consider two criteria regarding diabatic particle creation. If
our detectors are localized, we may be able to tolerate a low constant
density of stray particles created during state preparation. This
background is similar to that encountered in experiments, and may not
invalidate conclusions from the simulation. Alternatively, one could adopt a
strict criterion by demanding that, with high probability, not even
one stray particle is created in the volume being simulated during
state preparation. This strict criterion can be quantified by
demanding that the adiabatically produced state has an inner product of at
least $1-\epsilon$ with the exact state. This parameter $\epsilon$ is
thus directly comparable with that used in \sect{qubits}, and
the two sources of error can be added. Applying the strict criterion,
we demand that $P_{\mathrm{split}}$ and $P_{\mathrm{create}}$ each be
of order $\epsilon$, and obtain
\begin{equation}
\tau_{\mathrm{strict}} = \tilde{O} \left(
\frac{V}{\epsilon} \right) \,, \quad d=1,2,3  \,.
\end{equation}
Applying the more lenient criterion that $P_{\mathrm{create}}/V$ and
$P_{\mathrm{split}}$ each be of order $\epsilon$ yields
\begin{equation}
\tau_{\mathrm{lenient}} = \tilde{O} \left( \frac{1}{\epsilon} \right) \,,
\quad d=1,2,3 \,.
\end{equation}
For a $k\th$-order Suzuki-Trotter formula, the asymptotic scaling of the
total number of gates needed for adiabatic state preparation is
$O \left( (\mathcal{V} \tau)^{1+\frac{1}{2k}} \right) = O \left( (V
\tau/a^d)^{1+\frac{1}{2k}}) \right)$. Thus,
\begin{eqnarray}
\label{Gstrict}
G_{\mathrm{adiabatic}}^{\mathrm{strict}} & = & \tilde{O} \left(
\left( \frac{V^2}{a^d \epsilon} \right)^{1+\frac{1}{2k}} \right) \,, \\
\label{Glenient}
G_{\mathrm{adiabatic}}^{\mathrm{lenient}} & = & \tilde{O}
\left( \left( \frac{V}{a^d \epsilon} \right)^{1+\frac{1}{2k}} \right) \,.
\end{eqnarray}

\subsubsection{Strong Coupling}
\label{strong}

In two and three spacetime dimensions, we can obtain a strongly coupled
(that is, nonperturbative) field theory by approaching the 
phase transition (\sect{QPT}). As in the case of weak
coupling, the necessary time for adiabatic state preparation depends
on various physical parameters of the system being simulated,
including the momentum of the incoming particles, the volume, the
strength of the final coupling, the number of spatial dimensions, and
the physical mass. To keep the discussion concise, we restrict our
discussion to the case of ultrarelativistic incoming particles, with
coupling strength close to the critical value. Under these conditions,
the incoming particles can produce a shower of many 
$(n_{\mathrm{out}} \sim p/m)$ outgoing particles. Because of the strong
coupling, perturbation theory is inapplicable, and, even
if it could be used, would take exponential computation in the number of
outgoing particles.

In the strongly coupled case, we vary the Hamiltonian \ref{HS}
with $s$ by keeping the bare mass constant at $m_0$ and setting the
bare coupling to $s \lambda_0$. We choose $\lambda_0$ only slightly
below the critical value $\lambda_c$, so that at $s=1$ the system
closely approaches the phase transition, as illustrated in 
Fig.~\ref{paths}. Examining \eq{thetap2} suggests that we can estimate
phase errors by understanding the behavior of $m^2(s)$ at $s=0$ and
$s=1$, without needing to know exactly what happens in between. 
From \eq{eq:path}, 
\begin{equation}
\left. \frac{d m^2}{d s} \right|_{s=0} = \left\{ \begin{array}{ll} 
\frac{\lambda_0}{8 \pi} \log \left( \frac{64}{m_0^2 a^2} \right) &
d=1 \,, \\
\frac{25.379}{16
  \pi^2} \frac{\lambda_0}{a} & 
d=2 \,,
\end{array} \right.
\end{equation}
and, 
from \eq{numass} and \eq{nu}, 
\begin{equation}
\left. \frac{d m^2}{d s} \right|_{s=1} \sim \left\{
\begin{array}{ll}
-2 (\lambda_c - \lambda_0) & 
d=1 \,, \\
-1.26 (\lambda_c - \lambda_0)^{0.26} & 
d=2 \,.
\end{array} \right.
\end{equation}
Thus, \eq{phasecrit} yields
\begin{equation}
\label{unconcrete}
J = \tilde{O} \left( \sqrt{ \frac{\tau \lambda_0}{a^{d-1} p^2 \mathcal{D}}}
\right) \,, \quad d=1,2 \,,
\end{equation}
under the assumption that $(\lambda_c - \lambda_0)$ is very
small.

The result \ref{thetap2} rests on two approximations, a Taylor
expansion to second order in \eq{thetaj}, and an
approximation of a sum by an integral in \eq{thetap1}. The validity
conditions for these approximations become most stringent at $s=1$, 
where the derivatives of $m^2$ with respect to $s$ become large.  
Working out the $O(J^{-4})$ term in \eq{thetap2} at $s=1$, one
finds that it will be much smaller than the $O(J^{-2})$ term at $s=1$
provided
\begin{equation}
\label{Taylorcrit}
J \gg \frac{1}{\lambda_c - \lambda_0} \,.
\end{equation}
Similarly, higher-order terms in the Taylor expansion are suppressed
by additional powers of $\frac{1}{J(\lambda_c-\lambda_0)}$. The criterion
\ref{Taylorcrit} also suffices to justify the approximation of the sum
by an integral in \eq{thetap1}. 

We must next consider adiabaticity to determine $\tau$. In
the ultrarelativistic limit, the relevant energy gap $\gamma$ is $\sim
\frac{m^2}{p}$. This takes its minimum value at $s=1$, namely,
\begin{equation}
\label{gammamin}
\gamma_{\min} \simeq \left\{ \begin{array}{ll}
\frac{(\lambda_c - \lambda_0)^2}{p} \,,  & d=1 \,, \\
\frac{(\lambda_c - \lambda_0)^{1.26}}{p} \,, & d=2 \,.
\end{array} \right.
\end{equation}
Unlike in the perturbative case, we cannot make a detailed quantitative
analysis, but under the condition~\ref{Taylorcrit}, we should again be able 
to apply the traditional adiabatic criterion and obtain a diabatic
transition amplitude scaling as $\frac{J}{\tau \gamma^2}$. Thus, to
keep the error probability at some small constant $\epsilon$,  we have
\begin{equation}
\label{tauscale}
\tau \sim \frac{J}{\gamma^2 \sqrt{\epsilon}} \,.
\end{equation}

We now consider asymptotic scaling with $p$ for fixed
$\lambda_0$. To achieve continuum-like behavior we need $a \ll
\frac{1}{p}$. Thus \eq{unconcrete} yields
\begin{equation}
\label{pscaleJ}
J \sim \tau^{1/2} p^{(d-3)/2} \,, \quad d=1,2 \,.
\end{equation}
Substituting \eq{Taylorcrit} and \eq{gammamin} into
\eq{tauscale}, we see that we need
\begin{equation}
\label{cond1}
\tau \gtrsim p^2 \,, \quad d=1,2 \,.
\end{equation}
Substituting \eq{pscaleJ} and \eq{gammamin} into
\eq{tauscale}, we see that we need 
\begin{equation}
\label{cond2}
\tau \gtrsim p^{d+1} \,, \quad d=1,2 \,.
\end{equation}
The scaling $\tau = O(p^{d+1})$ for $d=1,2$ suffices to satisfy both
conditions~\ref{cond1} and \ref{cond2}. Thus, by \S \ref{Trotter},
the total number of gates scales as
\begin{eqnarray}
G_{\mathrm{strong}} & = & O((V\tau)^{1+o(1)}p^{d+1+o(1)}) \\
& = & O \left( V^{1+o(1)} p^{2d+2+o(1)} \right) \,, \label{strongpscale}
\end{eqnarray}
for $d=1,2$. 

Next, we consider asymptotic scaling with $(\lambda_c - \lambda_0)$
for fixed $p$. The $J$ scaling as $\sqrt{\tau}$ in
\eq{unconcrete} automatically satisfies the condition
\ref{Taylorcrit}. Thus, we substitute \eq{unconcrete} into
\eq{tauscale}, obtaining
\begin{equation}
\tau \sim \left\{ \begin{array}{ll} \left( \frac{1}{\lambda_c -
    \lambda_0} \right)^8 \,, & d=1 \,, \\
\left( \frac{1}{\lambda_c - \lambda_0} \right)^{5.04} \,, & d=2 \,.
\end{array} \right. 
\end{equation}
Thus, using a $k\th$-order Suzuki-Trotter formula, we obtain
\begin{equation}
\label{stronglambdascale}
G_{\mathrm{strong}} \sim \left\{ \begin{array}{ll} \left( \frac{1}{\lambda_c -
    \lambda_0} \right)^{8 \left( 1+\frac{1}{2k} \right)} \,, & d=1 \,, \\
\left( \frac{1}{\lambda_c - \lambda_0} \right)^{5.04
  \left(1+\frac{1}{2k} \right) } \,, & d=2 \,.
\end{array} \right. 
\end{equation}
Note that one could improve this scaling by choosing a more optimized
adiabatic state-preparation schedule, which slows down as the gap gets
smaller. 

\subsection{Suzuki-Trotter Formulae for Large Lattices}
\label{Trotter}

It appears that, while scaling with $t$ has been thoroughly studied,
little attention has been given to scaling of quantum simulation
algorithms with the number of lattice sites $\mathcal{V}$.
Using a result of Suzuki and elementary Lie algebra theory, we derive
linear scaling provided the Hamiltonian is local.

For any even $k$ and any pair of Hamiltonians $A, B$,
\begin{equation}
\label{hightrotter}
\left( e^{i A \alpha_1 t/n} e^{i B \beta_1 t/n} e^{i A \alpha_2 t/n}
e^{i \beta_2 B t/n} \ldots e^{i A \alpha_r t/n} \right)^n = e^{i (A + B) t} 
+ O(t^{2k+1}/n^{2k}) \,,
\end{equation}
where $r = 1+5^{k/2-1}$ and
$\alpha_1,\ldots,\alpha_r,\beta_1,\ldots,\beta_{r-1}$ are specially
chosen coefficients such that $\sum_{j=1}^r \alpha_j = 1$ and
$\sum_{j=1}^{r-1} \beta_j = 1$ \cite{Suzuki}. Thus, using the $k\th$-order 
Suzuki-Trotter formula (\eq{hightrotter}), one can simulate evolution 
for time $t$ with $O\left( t^{\frac{2k+1}{2k}} \right)$ quantum gates
\cite{Cleve_sim}. To determine the $\mathcal{V}$ scaling, we use the
following standard theorem (cf. the Baker-Campbell-Hausdorff
formula).
\begin{theorem}\label{BCH}
Let $A$ and $B$ be elements of a Lie algebra defined over any field of
characteristic 0. Then $e^{A} e^{B} = e^{C}$, where $C$ is a formal
infinite sum of elements of the Lie algebra generated by $A$ and $B$.
\end{theorem}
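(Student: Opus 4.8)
The plan is to reduce the claim to a \emph{universal} identity and then invoke the characterization of Lie elements as the primitive elements of a Hopf algebra. First I would work in the completed free associative algebra $\widehat{\mathcal{A}}$ on two noncommuting indeterminates $x,y$ over the ground field $k$, graded by total degree and completed so that formal power series are admitted. Since $\mathrm{char}\,k=0$, the maps $\exp(u)=\sum_{n\geq 0}u^n/n!$ and $\log(1+v)=\sum_{n\geq1}(-1)^{n+1}v^n/n$ are well defined on elements of zero (resp.\ unit) constant term and are mutually inverse. Because $e^x e^y$ has constant term $1$, the element $C:=\log(e^x e^y)$ is a well-defined series with no constant term, and I read $e^x e^y=e^C$ as an identity in $\widehat{\mathcal{A}}$ each of whose homogeneous components is a finite expression in $x$ and $y$. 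The whole content of the theorem is that $C$ lies in the free Lie algebra $\mathcal{L}(x,y)\subset\widehat{\mathcal{A}}$, i.e.\ that each homogeneous piece of $C$ is a sum of iterated brackets of $x$ and $y$.

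To prove this I would equip $\widehat{\mathcal{A}}$ with the coproduct $\Delta\colon\widehat{\mathcal{A}}\to\widehat{\mathcal{A}}\,\widehat{\otimes}\,\widehat{\mathcal{A}}$, the unique continuous algebra homomorphism with $\Delta(x)=x\otimes1+1\otimes x$ and $\Delta(y)=y\otimes1+1\otimes y$. Call $g$ \emph{group-like} if $\Delta(g)=g\otimes g$ and $z$ \emph{primitive} if $\Delta(z)=z\otimes1+1\otimes z$. The core computation is short: since $x\otimes1$ and $1\otimes x$ commute, $\Delta(e^x)=\exp(x\otimes1+1\otimes x)=(e^x\otimes1)(1\otimes e^x)=e^x\otimes e^x$, so $e^x$ is group-like, and similarly $e^y$. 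As $\Delta$ is multiplicative, $\Delta(e^xe^y)=(e^x\otimes e^x)(e^y\otimes e^y)=(e^xe^y)\otimes(e^xe^y)$, so $e^xe^y$ is group-like. Applying $\log$ and using that $g\otimes1$ commutes with $1\otimes g$ gives $\Delta(C)=\log\!\big((e^xe^y)\otimes1\big)+\log\!\big(1\otimes(e^xe^y)\big)=C\otimes1+1\otimes C$; that is, $C$ is primitive.

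The main step -- and the one I expect to be the genuine obstacle -- is \emph{Friedrichs' criterion}: over a field of characteristic $0$, the primitive elements of the free associative algebra are \emph{exactly} the elements of the free Lie algebra $\mathcal{L}(x,y)$. The inclusion (Lie $\Rightarrow$ primitive) is a routine induction on bracket length, but the converse (primitive $\Rightarrow$ Lie) carries the real weight. I would establish it through the Dynkin--Specht--Wever map, which sends a homogeneous word $z_1z_2\cdots z_n$ to $\tfrac1n[z_1,[z_2,[\cdots,z_n]\cdots]]$; this map acts as the identity on homogeneous Lie elements, and the division by $n$ is precisely where $\mathrm{char}\,k=0$ enters. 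Granting the criterion, the primitivity of $C$ forces $C\in\mathcal{L}(x,y)$, which is the universal form of the statement.

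Finally I would specialize to an arbitrary Lie algebra $\mathfrak{g}$ over $k$. Given $A,B\in\mathfrak{g}$, apply the continuous algebra homomorphism $\widehat{\mathcal{A}}\to\widehat{U(\mathfrak{g})}$ sending $x\mapsto A$, $y\mapsto B$, where $U(\mathfrak{g})$ is the universal enveloping algebra (completed by powers of its augmentation ideal). This map intertwines $\exp$ and $\log$, so it carries $e^xe^y=e^C$ to $e^Ae^B=e^{C(A,B)}$; moreover it sends each associative commutator $uv-vu$ to the corresponding Lie bracket in $\mathfrak{g}$, so it maps $\mathcal{L}(x,y)$ into the Lie subalgebra generated by $A$ and $B$. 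Hence $C(A,B)$ is a formal (infinite) sum of iterated brackets of $A$ and $B$, exactly as the theorem asserts. The characteristic-$0$ hypothesis is used only through the formal $\exp/\log$ and through the Dynkin--Specht--Wever step.
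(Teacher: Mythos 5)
Your proposal is correct, and it is the classical Hopf-algebraic proof of the Baker--Campbell--Hausdorff theorem; note, however, that the paper itself supplies no proof at all --- it introduces Theorem~\ref{BCH} as ``the following standard theorem (cf.\ the Baker-Campbell-Hausdorff formula)'' and immediately applies it to bound the Trotter error term $\Delta_{2k+1}$, so there is no argument in the paper to compare against step by step. Your route --- completing the free associative algebra on $x,y$, checking that $e^x e^y$ is group-like for the coproduct in which $x$ and $y$ are primitive, deducing that $C=\log(e^x e^y)$ is primitive, invoking Friedrichs' criterion that in characteristic $0$ the primitive elements are exactly the free Lie elements, and then specializing through the completed universal enveloping algebra $\widehat{U(\mathfrak{g})}$ --- is precisely the standard textbook argument (Friedrichs; cf.\ Serre, Jacobson, Reutenauer), and it correctly isolates the two places where $\mathrm{char}\,k=0$ enters (formal $\exp/\log$, and the division by $n$ in the Dynkin--Specht--Wever map). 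The one step your sketch leaves genuinely open is the converse half of Friedrichs' criterion: it is not enough that the Dynkin--Specht--Wever bracketing acts (after division by $n$) as the identity on homogeneous Lie elements; one must also prove that it sends any \emph{primitive} homogeneous element of degree $n$ to $n$ times itself, which is where the primitivity hypothesis is actually consumed, and a full write-up should include that lemma. Your final specialization step is also the right way to give the abstract statement $e^A e^B = e^C$ rigorous meaning for an arbitrary Lie algebra, something the paper's terse statement glosses over; for the paper's purposes only the formal identity is needed, since $\Delta_{2k+1}$ is there treated exactly as you produce it --- a linear combination of nested commutators of $A$ and $B$, whose locality then yields the $O(\mathcal{V})$ bound.
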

$A$ and $B$ generate a Lie algebra by commutation and linear
combination. Thus, without requiring any explicit calculation,
Theorem~\ref{BCH} together with \eq{hightrotter} implies
\begin{equation}
\left( e^{i A \delta_1 t/n} e^{i B \delta_2 t/n} \ldots e^{i A
  \delta_r t/n} \right)^n = e^{i (A + B) t} + \Delta_{2k+1}
t^{2k+1}/n^{2k} + O(n^{-(2k+1)}) \,,
\end{equation}
where $\Delta_{2k+1}$ is a linear combination of nested
commutators. In general, $\| \Delta_{2k+1} \|$ could be as large as 
$\left( \max \left\{ \|A\|, \|B\| \right\} \right)^{2k+1}$. However, 
by the canonical commutation relations, one sees that, for the pair of local 
Hamiltonians $H_{\phi}, H_{\pi}$, $\|\Delta_{2k+1}\| =
O(\mathcal{V})$, for any fixed $k$. Thus, one needs only
$n = O \left( t^{\frac{2k+1}{2k}} \mathcal{V}^{\frac{1}{2k}} 
\right)$.  Recalling the $O(\mathcal{V})$ cost for simulating each 
$e^{i H_{\phi} \delta t}$ or $e^{i  H_{\pi} \delta t}$, one sees that 
the total number of gates scales as $O\left( \left( t \mathcal{V} 
\right)^{1+\frac{1}{2k}} \right)$. Note that this conclusion may be of 
general interest, as it applies to any lattice Hamiltonian for which 
non-neighboring terms commute.

In the case of strong coupling, we care not only about how the number
of gates scales with $\mathcal{V}$, but also about scaling with
$p$. In the presence of high-energy incoming particles, the field can
have large distortions from its vacuum state. For example, if
$\bra{\psi} \phi(\mathbf{x}) \ket{\psi}$ is large, then local terms in
$\Delta_{2k+1} \ket{\psi}$ such as $\pi(\mathbf{x}) \phi(\mathbf{x})^3
\ket{\psi}$ can become large. We can obtain a heuristic upper bound on
this effect by noting that, in the strongly coupled case, $m_0^2 > 0$,
so each local term in $H$ is a positive operator. Thus, if $\bra{\psi}
H \ket{\psi} \leq E$, then the expectation value of each of the local
terms is bounded above by $E$. Using $E$ as a simple estimate of the
maximum magnitude of a local term, we see that $\Delta_{2k+1}
\ket{\psi}$, which is a sum of $O(\mathcal{V})$ terms, each of which
is of degree $2k+1$ in the local terms of $H$, has magnitude at most
$O(\mathcal{V} E^{2k+1})$, or in other words $O(\mathcal{V}
p^{2k+1})$. Recalling that $a$ scales as a small multiple of $1/p$, we
see that $\Delta_{2k+1} \ket{\psi} = O(V p^{2k+1+d})$. Thus, $n =
O(p^{1+(1+d)/2k}t^{1+1/2k})$. Each timestep requires $O(\mathcal{V}) =
O(Vp^d)$ gates to implement. Thus, the overall scaling is
$O(p^{d+1+o(1)} (tV)^{1+o(1)})$ quantum gates to simulate the strongly 
coupled theory at large $p$.


\end{document}